\declaretheorem[name=Corollary]{cor}
\declaretheorem[name=Instantiation]{instant}
\DeclarePairedDelimiter{\floor}{\lfloor}{\rfloor}
\newcommand{\build}{\textup{\textsf{Build}}}
\newcommand{\lookup}{\textup{\textsf{Lookup}}}
\newcommand{\gen}{\textup{\textsf{Gen}}}
\newcommand{\found}{\textup{\textsf{found}}}
\newcommand{\calS}{\mathcal{S}}
\newcommand{\simulator}{\textup{\textsf{Sim}}}
\newcommand{\view}{\textup{\textsf{View}}}
\newcommand{\algmargin}{\the\ALG@thistlm}
\newlength{\whilewidth}
\algnewcommand{\parState}[1]{\State%
	\parbox[t]{\dimexpr\linewidth-\algmargin}{\strut #1\strut}}
\spnewtheorem{claim}{Claim}{\itshape}{\rmfamily}
\begin{document}
\setlength{\textfloatsep}{7pt}
\setlength{\intextsep}{7pt}
\setlength{\abovecaptionskip}{7pt}
\title{Sub-logarithmic Distributed Oblivious RAM with Small Block Size\thanks{A first technical report was published on arXiv.org e-Print archive as  	arXiv:1802.05145 [cs.CR].}}


\author{Eyal Kushilevitz \and Tamer Mour}
\institute{Computer Science Department, Technion, Haifa 32000, Israel\\
\email{\{eyalk,tamer.mour\}@cs.technion.ac.il}}

\maketitle
\begin{abstract}
	\emph{Oblivious RAM} (ORAM) is a cryptographic primitive that allows a client to securely execute RAM programs over data that is stored in an untrusted server. \emph{Distributed Oblivious RAM} is a variant of ORAM, where the data is stored in $m>1$ servers. Extensive research over the last few decades have succeeded to reduce the bandwidth overhead of ORAM schemes, both in the single-server and the multi-server setting, from $O(\sqrt{N})$ to $O(1)$. However, all known protocols that achieve a sub-logarithmic overhead either require heavy server-side computation (e.g. homomorphic encryption), or a large block size of at least $\Omega(\log^3 N)$.
	
	In this paper, we present a family of distributed ORAM constructions that follow the hierarchical approach of Goldreich and Ostrovsky\cite{GO}. We enhance known techniques, and develop new ones, to take better advantage of the existence of multiple servers. By plugging efficient known hashing schemes in our constructions, we get the following results:
	\begin{enumerate}
		\item For any number $m\geq 2$ of servers, we show an $m$-server ORAM scheme with $O(\log N/\log\log N)$ overhead, and block size $\Omega(\log^2 N)$. This scheme is private even against an $(m-1)$-server collusion.
		\item A three-server ORAM construction with $O(\omega(1)\cdot\log N/\log\log N)$ overhead and a block size almost logarithmic, i.e. $\Omega(\log^{1+\epsilon}N)$.
	\end{enumerate}
	
	We also investigate a model where the servers are allowed to perform a linear amount of light local computations, and show that constant overhead is achievable in this model, through a simple four-server ORAM protocol. Through the theoretical lens, this is the first ORAM scheme with asymptotic constant overhead, and polylogarithmic block size, that does not use homomorphic encryption. Practically speaking, although we do not provide an implementation of the suggested construction, evidence from related work (e.g.\cite{DS}) makes us believe that despite the linear computational overhead, the construction can be potentially very efficient practically, in particular when applied to secure computation.
	\keywords{Oblivious RAM, Multi-Server Setting, Secure Computation, Private Storage.}
\end{abstract}
\section{Introduction}
Since it was first introduced by Goldreich and Ostrovsky\cite{GO}, the \emph{Oblivious RAM} problem has attracted a lot of attention (see, e.g. \cite{TREE,KLO,CIRCUIT}). Throughout the past three decades, efficient ORAM protocols were constructed (e.g. \cite{GM,PORAM}), their various applications, such as secure storage\cite{OS,AKST}, secure processors\cite{PROCESSORS}, and secure multi-party computation\cite{MPCORAM,LO},  were studied, and their limits were considered\cite{GO,PIRORAM,LN}.

\paragraph{Standard Model.} The standard ORAM model considers a setting where a client outsources his data to an untrusted server that supports read and write operations only. The goal of an \emph{ORAM simulation} is to simulate any RAM program that the client executes over the remote data, so that the same computation is performed, but the view of the server during the interaction would provide no information about the client's private input and the program executed, except their length. Clearly, encryption can be employed to hide the \emph{content} of the data, but the sequence of reads and write locations might leak information as well. Thus, the focus of ORAM is to hide the \emph{access pattern} made to the server.  The main metric considered in ORAM research is the \emph{bandwidth overhead} of an ORAM scheme (shortly referred to as ``overhead"), which is the multiplicative increase in the amount of communication incurred by an oblivious simulation relative to a regular run of the simulated program. In this standard model, researchers have been able to improve the overhead from $O(\log^3 N)$\cite{GO} to $O(\log N)$\cite{PORAM,CIRCUIT,OPTORAMA}, where $N$ is the number of data blocks in storage, and thus reaching the optimal overhead in that model due to the matching impossibility results of Goldreich and Ostrovsky\cite{GO} and Larsen and Nielsen\cite{LN}.

In an attempt to achieve sub-logarithmic overhead, research has deviated from the standard model (e.g. \cite{LO,AKST,WGK}). For instance, by allowing the server to perform some local computation, multiple works\cite{AKST,BUCKET,ONION} could achieve a constant overhead. However, this improvement comes at a cost: the server performs heavy homomorphic encryption computation which practically becomes the actual bottleneck of such schemes.

\paragraph{Distributed Oblivious RAM.} Another interesting line of work, often referred to as \emph{Distributed Oblivious RAM}\cite[etc.]{ZMZQ,PIRORAM,WGK}, was initiated by Ostrovsky and Shoup\cite{OS} and later refined by Lu and Ostrovsky\cite{LO}, and considers the multi-server setting. We denote by \emph{$(m,t)$-ORAM} an ORAM scheme that involves $m>1$ servers, out of which $t<m$ servers might collude. In the two-server setting, Zhang et al.\cite{ZMZQ} and Abraham et al.\cite{PIRORAM} construct $(2,1)$-ORAMs with sub-logarithmic overhead. In order to achieve $O(\log_d N)$ overhead (for any $d\in \mathbb{N}$) using their construction, Abraham et al. require that the size of a memory block, i.e. the data unit retrieved in a single query to the RAM, is $\Omega(d\log^2 N)$ (with larger blocks the asymptotic overhead increases). For example, for an overhead of $O(\log N/\log\log N)$, one has to work with blocks of relatively large size of $\Omega(\log^3 N)$, which may be undesired in many applications. In the work of Zhang et al., a polynomial block size of $\Omega(N^\epsilon)$ is required for a constant bandwidth blowup. Other attempts to achieve low overhead in the multi-server setting\cite{CHF} were shown to be vulnerable to concrete attacks\cite{PIRORAM}. These recent developments in distributed ORAM raise the following question, which we address in this paper:

\begin{center}
	\emph{Can we construct a sub-logarithmic distributed ORAM with a small block size?}
\end{center}

Known sub-logarithmic ORAMs\cite{PIRORAM,ZMZQ} belong to the family of \emph{tree-based ORAMs}\cite{TREE}. One of the key components in tree-based ORAMs is a \emph{position map} that is maintained through a recursive ORAM. Such a recursion imposes the requirement for a large block size (usually polylogarithmic)\footnote{To the best of our knowledge, the only tree-based ORAM that bypasses recursion belongs to Wang et al.~\cite{WGK}, which works in a different model where linear sever work is allowed (see preceding discussion).}. Thus, it seems that a positive answer to the question above will come, if at all, from constructions belonging to the other well-studied type of ORAMs, those based on the hierarchical solution of\cite{GO}. By applying the hierarchical approach to the distributed setting, Lu and Ostrovsky\cite{LO} succeeded to construct the first logarithmic \emph{hierarchical} ORAM scheme. In this paper, we show how to take a further advantage of the multiple servers in order to beat the logarithmic barrier, and still use a relatively small block size, with constructions in both the two-server and three-server settings. In addition, we consider the case where $t>1$, and show how to generalize our two-server solution to an $(m,m-1)$-ORAM, with the same asymptotic complexity, for any $m>2$.

\begin{table}[t]
	\small
	\begin{minipage}{\textwidth}
		\centering
		\begin{tabular}{ | l || c | c | c | c | c | c |}
			\hline
			Scheme	& $m$	& $t$ & Overhead	& Block size $B$& C. Strg. & S. Work \\ \hline\hline
			\cite{O,GO}&1 &-&$O(\log^3{N})$ & $\Omega(\log N)$ & $O(1)$ & -	\\ \hline
			\cite{KLO}&1&-&  $O(\frac{\log^2{N}}{\log\log N})$ &	$\Omega(\log N)$ & $O(1)$ & -  \\ \hline
			\cite{CIRCUIT}&1&-& $O(\log N\cdot\omega(1))$ & $\Omega(\log^2 N)$	& $O(1)$ & - \\ \hline
			\cite{OPTORAMA}&1&-& 	$O(\log N)$ & $\Omega(\log N)$  & $O(1)$ & - \\ \hline
			\cite{LO}& 2 & 1 & $O(\log{N})$ & $\Omega(\log N)$	& $O(1)$ & polylog  \\ \hline
			\cite{PERFECT3ORAM}& 3 & 1 & $O(\log^2{N})$ & $\Omega(\log N)$	& $O(1)$ & -  \\ \hline
			\cite{ZMZQ}& 2 & 1 & $O(1)$	& $\Omega(N^\epsilon)$ & $O(1)$ & polylog  \\ \hline
			\cite{PIRORAM}& 2	& 1	& $O(\log_d{N})$ &  $\omega(d\log^2{N})$& $O(1)$ & polylog\\ \hline
			\cite{DS}& 2 & 1 & $O(\sqrt{N})$	& $\Omega(\log N)$ & $O(1)$ & linear  \\ \hline
			\cite{WGK}& 2 & 1 & $O(\log N)$	& $\Omega(\log N)$ & $O(1)$ & linear  \\ \hline\hline
			\multicolumn{7}{|c|}{our 4-server construction}  \\ \hline
			Instantiation~\ref{inst:fss}& 4	& 1	& $O(1)$  & $\Omega(\lambda\log N)$ &$O(1)$ & linear \\ \hline\hline
			\multicolumn{7}{|c|}{our 3-server construction}  \\ \hline
			Instantiation~\ref{inst:three-klo}&3 &	1 	&  $O(\log_d{N}\cdot\omega(1))$ & $\Omega(d\log N)$ & $O(1)$ & polylog  \\
			\multicolumn{1}{|l||}{$d=\log^\epsilon N$}& &	 	&  $O(\frac{\log{N}}{\log\log{N}}\cdot \omega(1))$ & $\Omega(\log^{1+\epsilon} N)$ & &   \\ \hline
			Instantiation~\ref{inst:three-cgls}&3 &	 1	&  $O(\log_d{N})$ & $\Omega(d\log^{1.5} N)$ & $O(1)$ & polylog  \\
			\multicolumn{1}{|l||}{$d=\log^\epsilon N$}& &	 	&  $O(\frac{\log{N}}{\log\log{N}})$ & $\Omega(\log^{1.5+\epsilon} N)$ & &   \\ \hline\hline
			\multicolumn{7}{|c|}{our $m$-server construction}  \\ \hline
			Instantiation~\ref{inst:m-cgls}&$m\geq 2$ &	$m-1$ 	&  $O(\frac{\log N}{\log\log N})$ & $\Omega(\log^{2} N)$ & $O(1)$ & polylog  \\ \hline
		\end{tabular}
		\caption{Comparison of ORAM schemes. Columns by order: number of servers $m$, collusion size $t$, bandwidth overhead, block size $B$, client storage size in blocks (additive factors of the security parameter are omitted), and amortized server-side work per RAM step (computational overhead). $\lambda$ is a security parameter.}
		\label{table:results}
	\end{minipage}
\end{table}

\paragraph{Allowing Linear Server Work.} Although distributed ORAM schemes in the literature assume the servers are able to perform some server-side computations, most works limit them to be polylogarithmic in $N$ per read/write. An exception are two recent works\cite{DS,WGK} that investigate oblivious RAM when the servers are allowed to perform computations that are linear in $N$. In this stronger variant, the relatively new cryptographic primitive of \emph{function secret sharing} (FSS), introduced by Boyle et al.\cite{FSS}, was shown to be useful for constructing schemes that are practically efficient\cite{DS}, or that are of low interaction\cite{WGK}. However, none of the mentioned schemes achieve sub-logarithmic overhead, thus leaving us with the following question:

\begin{center}
	\emph{How efficient can distributed ORAM be if the servers do linear work?}
\end{center}

We show that by allowing the servers to perform linear computations per RAM step, we can achieve a distributed ORAM scheme with a small constant overhead. We restrict ourselves to simple server-side computations which were shown to perform well in practical implementations\cite{DS}. Therefore, we believe that our construction has the potential to be very efficient in practical applications, such as those to secure computation, as discussed below.

\paragraph{Application to Secure Computation.} Besides potentially being more efficient, distributed oblivious RAM was shown to be a useful tool in constructing secure computation protocols\cite{OS,MPCORAM,LO}. By using our distributed ORAM constructions as the underlying ORAM schemes in the elegant protocol from\cite{LO}, we get better asymptotic parameters for secure RAM computation than any known single-server or multi-server ORAM solution. Furthermore, our constructions are the first to be applicable to multi-party computation protocols that are secure against collusions.

\subsection{Our Contribution and Technical Overview}

\subsubsection{Sub-logarithmic Distributed ORAM Constructions.} Our main contribution is a family of distributed hierarchical ORAM constructions with any number of servers. Our constructions make a black-box use of hashing schemes. Instantiating our constructions with known hashing schemes, that were previously used in ORAM constructions~\cite{GM,KLO,LO,CGLS}, yields state-of-the-art results (see Table~\ref{table:results}). We elaborate below.
\paragraph{A Three-server ORAM Protocol.} By using techniques from \cite{LO} over the balanced hierarchy from\cite{KLO}, and using two-server PIR\cite{PIR} as a black box, we are able to construct an efficient $(3,1)$-ORAM scheme. Instantiating the scheme with cuckoo hash tables (similarly to \cite{GM,KLO,LO}) achieves an overhead of $O(\omega(1)\cdot \log_d N)$ with a block size of $B=\Omega(d\log N)$. Thus, for any $\epsilon>0$, we achieve $O(\omega(1)\cdot \log N/\log\log N)$ overhead with $B=\Omega(\log^{1+\epsilon}N)$.

In the classic hierarchical solution from~\cite{GO}, the data is stored in $\log N$ levels, and the protocol consists of two components: \emph{queries}, in which target virtual blocks are retrieved, and \emph{reshuffles}, which are performed to properly maintain the data structure. Roughly speaking, in a query, a single block is downloaded from every level, resulting in $\log N$ overhead per query. The reshuffles cost $\log N$ overhead per level, and $\log^2 N$ overall. Kushilevitz et al.\cite{KLO} suggest to balance the hierarchy by reducing the number of levels to $\log N/\log\log N$. In the balanced hierarchy, however, one has to download $\log N$ blocks from a level in every query. Thus, balancing the hierarchy ''balances'', in some sense, the asymptotic costs of the queries and reshuffles, as they both become $\log^2 N/\log\log N$.

At a high level, we carefully apply two-server techniques to reduce the overhead, both of the queries and the reshuffles, in the single-server ORAM of\cite{KLO}. More specifically, to reduce the queries cost, we use two-server PIR to allow the client to efficiently read the target block from the $\log N$ positions, it had otherwise have to download, from every level. By requiring the right (relatively small) block size, the cost of PIRs can be made constant per level and, therefore, $\log N/\log\log N$ in total. To reduce the reshuffles cost, we replace the single-server reshuffles with the cheaper two-server reshuffles, that were first used by Lu and Ostrovsky\cite{LO}, and that incur only a constant overhead per level.

So far, it sounds like we are already able to achieve $\log N/\log\log N$ overhead using two servers only. However, combining two-server PIR and two-server reshuffles is tricky: each assumes a different distribution of the data. In standard two-server PIR, the data is assumed to be identically replicated among the two servers. On the other hand, it is essential for the security of the two-server reshuffles from~\cite{LO} that every level in the hierarchy is held only by one of the two servers, so that the other server, which is used to reshuffle the data, does not see the access pattern to the level. We solve this problem by combining the two settings using three servers: every level is held only by two of the three servers in a way that conserves the security of the two-server reshuffles and, at the same time, provides the required setting for two-server PIR.
	
\paragraph{An $(m,m-1)$-ORAM Protocol.} We take further advantage of the existence of multiple servers and construct, for any integer $m\geq 2$, an $m$-server ORAM scheme that is private against a collusion of up to $m-1$ servers. Using oblivious two-tier hashing\cite{CGLS}, our scheme achieves an overhead of $O(\log N/\log\log N)$, for which it requires $B=\Omega(\log^2 N)$ (see Theorem~\ref{thrm:multi} and Instantiation~\ref{inst:m-cgls}).

We begin by describing a $(2,1)$-ORAM scheme, then briefly explain how to extend it to any number of servers $m>2$. Let us take a look back at our three-server construction. We were able to use both two-server PIR and two-server reshuffles only using a three-server setting. Now that we restrict ourselves to using two servers, we opt for the setting where the two servers store identical replicates of the entire data structure. Performing PIR is clearly still possible, but now that the queries in all levels are made to the same two servers, we cannot perform Lu and Ostrovsky's\cite{LO} two-server reshuffles securely. Instead, we use \emph{oblivious sort} (or, more generally, oblivious hashing) to reshuffle the levels. Oblivious sort is a sorting protocol in the client-server setting, where the server involved learns nothing about the obtained order of blocks. Oblivious sort is used in many single-server hierarchical ORAMs (e.g.~\cite{GO,KLO}), where it incurs $\log N$ overhead per level. Since we aim for a sub-logarithmic overhead, we avoid this undesired blowup by performing oblivious sort over the tags of the blocks only (i.e. their identities) which are much shorter, rather than over the blocks themselves. We require a block size large enough such that the gap between the size of the tags and the size of the blocks cancels out the multiplicative overhead of performing oblivious sort. Once the tags are shuffled into a level, it remains to match them with the blocks with the data. That is where the second server is used. We apply a secure two-server ''matching procedure'' which, at a high level, lets the second server to randomly permute the data blocks and send them to the server holding the shuffled tags. The latter can then match the data to the tags in an oblivious manner. Of course, the data exchange during the matching has to involve a subtle cryptographic treatment to preserve security.

The above scheme can be generalized to an $(m,m-1)$-ORAM, for any $m>2$. The data is replicated in all servers involved, and $m$-server PIR is used. The matching procedure is extended to an $m$-server procedure, where all the servers participate in randomly permuting the data.

\subsubsection{Constant Overhead with Linear Server Work.} We also investigate the model where linear server work is allowed, and show that constant overhead is achievable in this model (see Table~\ref{table:results}). The proposed scheme, described below, applies function secret sharing over secret-shared data, thus avoiding the need for encrypting the data using symmetric encryption (unlike existing schemes e.g. \cite{DS,WGK}).

\paragraph{A Simple Four-server ORAM Protocol.} Inspired by an idea first suggested in~\cite{OS}, we combine private information retrieval (PIR)\cite{PIR}, and PIR-write\cite{OS}, to obtain a four-server ORAM. To implement the PIR and PIR-write protocols efficiently, we make a black-box use of \emph{distributed point functions} (DPFs)\cite{DPF,FSS}, i.e. function secret sharing schemes for the class of point functions. Efficient DPFs can be used to construct \begin{enumerate*}[label=(\roman*)]\item a (computational) two-server PIR protocol if the data is replicated among the two servers, or\item a two-server PIR-write protocol for when the data is additively secret-shared among the two servers \end{enumerate*}. These two applications of DPFs are combined as follows: we create two additive shares of the data, and replicate each share twice. We send each of the four shares (two pairs of identical shares) to one of the four servers. A read is simulated with two instances of PIR, each invoked with a different pair of servers holding the same share. A write is simulated with two instance of PIR-write, each invoked with a different pair of servers holding different shares. 

We stress that the client in all of our constructions can be described using a simple small circuit, and therefore, our schemes can be used to obtain efficient secure multi-party protocols, following \cite{LO}. We elaborate on this in Appendix~\ref{app:mpc}. 

\subsection{Related Work}\label{sec:relatedwork}
\paragraph{Classic Hierarchical Solution.} The first hierarchical ORAM scheme appeared in the work of Ostrovsky \cite{O} and later in\cite{GO}. In this solution, the server holds the data in a hierarchy of levels, growing geometrically in size, where the $i^{th}$ level is a standard hash table with $2^i$ buckets of logarithmic size, and a hash function $h_i(\cdot)$, which is used to determine the location of blocks in the hash table: block of address $v$ may be found in level $i$ (if at all) in bucket $h_i(v)$. The scheme is initiated when all blocks are in the lowest level. An access to a block with a virtual address $v$ is simulated by downloading bucket $h_i(v)$ from every level $i$. Once the block is found, it is written back to the appropriate bucket in the smallest level ($i=0$). As a level fills up, it is merged down with the subsequent (larger) level $i+1$, which is reshuffled with a new hash function $h_{i+1}$ using oblivious sorting. Thus, a block is never accessed twice in the same level with the same hash function, hence the obliviousness of the scheme. Using AKS sorting network\cite{AKS} for the oblivious sort achieves an $O(\log^3 N)$ overhead.

\paragraph{Balanced Hierarchy.} Up until recently, the best known single-server ORAM scheme for general block size, with constant client memory, were obtained by applying an elegant ''balancing technique'' to the hierarchy of\cite{GO}, that reduces the number of levels in the hierarchy, in exchange for larger levels. The technique was first suggested by Kushilevitz et al.\cite{KLO}. Their scheme achieves an overhead of $O(\log^2/\log\log N)$, using \emph{oblivious cuckoo hashing} (first applied to ORAM in \cite{ORAMCUCKOOPR,GM}). An alternative construction, recently proposed by Chan et al.\cite{CGLS}, follows the same idea, but replaces the relatively complex cuckoo hashing with a simpler oblivious hashing that is based on a variant of the two-tier hashing scheme from\cite{TWOTIER}.

\paragraph{Tree-based ORAM.} Another well-studied family of ORAM schemes is tree-based ORAMs (e.g.\cite{TREE,PORAM,CIRCUIT,ONION,PIRORAM}). Tree-based ORAMs, as the name suggests, are oblivious RAM schemes where the data is stored in a tree structure. The first ORAM constructions with a logarithmic overhead, in the single-server model, were achieved following the tree-based approach \cite{PORAM,CIRCUIT}. However, tree-based ORAMs require relatively large block size of at least $B=\Omega(\log^2 N)$ (see Table~\ref{table:results}).
\setlength\tabcolsep{1.5pt}

\paragraph{Optimal ORAM with General Block Size.} The recent work of Asharov et al.\cite{OPTORAMA}, which improves upon the work of Patel et al.\cite{PANORAMA}, succeeds to achieve optimal logarithmic overhead with general block size (due to known lower bounds\cite{GO,LN}). Both results are based on the solution from~\cite{GO} and use non-trivial properties of the data in the hierarchy to optimize the overhead. 

\paragraph{Distributed ORAM Constructions.} Ostrovsky and Shoup\cite{OS} were the first to construct a distributed private-access storage scheme (that is not read-only). Their solution is based on the hierarchical ORAM from\cite{GO}. However, their model is a bit different than ours: they were interested in the amount of communication required for a single query (rather than a sequence of queries), and they did not limit the work done by the servers. Lu and Ostrovsky\cite{LO} considered the more general ORAM model, defined in Section~\ref{sec:model}. They presented the first two-server oblivious RAM scheme, and achieved a logarithmic overhead with a logarithmic block size by bypassing oblivious sort, and replacing it with an efficient reshuffling procedure that uses the two servers.

The tree approach was also studied in the multi-server model. Contrary to the hierarchical schemes, known distributed tree-based ORAMs\cite{ZMZQ,PIRORAM} beat the logarithmic barrier. The improvement in overhead could be achieved by using $k$-ary tree data structures, for some parameter $k=\omega(1)$. However, these constructions suffer from a few drawbacks, most importantly, they require a large polylogarithmic (sometimes polynomial) block size.

\paragraph{Amortized vs. Worst-Case Overhead.} Although overhead is defined as the amortized blowup in bandwidth, several works also considered the worst-case overhead per query. While tree-based schemes typically achieve low worst-case overhead\cite{TREE,PORAM,CIRCUIT}, hierarchical ORAMs might need additional work to turn their amortized overhead into worst-case overhead. Ostrovsky and Shoup\cite{OS} were the first to propose a method to de-amortize hierarchical constructions, which was adapted in some consequent works (e.g. \cite{GM,KLO}).

\paragraph{ORAMs with Linear Server-side Computational Overhead.} The work of Ostrovsky and Shoup~\cite{OS}, as well as some recent works\cite{DS,WGK} have considered the model where the servers are allowed to perform a linear amount of light computations. Both the works of Doerner and Shelat\cite{DS} and Wang et al.\cite{WGK} elegantly implement techniques from the standard model (square-root construction, and tree structure, respectively), and use the efficient PIR protocol from\cite{FSS}, to construct practically efficient two-server ORAM schemes with linear server-side computational overhead and bandwidth overhead matching their analogues in the single-server setting (see Table~\ref{table:results}).

\subsection{Paper Organization}
Section~\ref{sec:pre} contains a formal definition of the model and problem, as well as a description of the cryptographic tools we use in our constructions. In Section~\ref{sec:4servers}, we present our simple four-server ORAM with linear server work. In Section~\ref{sec:klo}, we provide a high-level description of the hierarchical ORAM framework, on which our main distributed ORAM constructions are based. In Sections~\ref{sec:3servers} and~\ref{sec:2multiservers}, we present these constructions, beginning with a high-level overview of each of them, followed by the full details and analysis. We finish with a few open questions. Due to page limit, de-amortization of our constructions, and a discussion of their application to secure computation, are left to Appendix~\ref{app:deamortize} and~\ref{app:mpc} (resp.), together with some additional complementary material.
\section{Preliminaries}\label{sec:pre}
\subsection{Model and Problem Definition}\label{sec:model}
\paragraph{The RAM Model.} We work in the RAM model, where a RAM machine consists of a CPU that interacts with a (supposedly remote) RAM storage. The CPU has a small number of local registers, therefore it uses the RAM storage for computations over large data, by performing a sequence of reads and writes to memory locations in the RAM. A sequence of $\ell$ queries is a list of $\ell$ tuples $(op_1,v_1,x_1),\dots,(op_\ell,v_\ell,x_\ell)$, where $op_i$ is either \textsf{Read} or \textsf{Write}, $v_i$ is the location of the memory cell to be read or written to, and $x_i$ is the data to be written to $v_i$ in case of a \textsf{Write}. For simplicity of notation, we unify both types of operations into an operation known as an \emph{access}, namely ``\textsf{Read} then \textsf{Write}". Hence, the \emph{access pattern} of the RAM machine is the sequence of the memory locations and the data $(v_1,x_1),\dots,(v_\ell,x_\ell)$.

\paragraph{Oblivious RAM Simulation.} A \emph{(single-server) oblivious RAM simulation}, shortly \emph{ORAM simulation}, is a simulation of a RAM machine, held by a client as a CPU, and a server as RAM storage. The client communicates with the server, and thus can query its memory. The server is untrusted but is assumed to be \emph{semi-honest}, i.e. it follows the protocol but attempts to learn as much information as possible from its view about the client's input and program. We also assume that the server is not just a memory machine with I/O functionality, but that it can perform basic local computations over its storage (e.g. shuffle arrays, compute simple hash functions, etc.). We refer to the access pattern of the RAM machine that is simulated as the \emph{virtual} access pattern. The access pattern that is produced by the oblivious simulation is called the \emph{actual} access pattern. The goal of ORAM is to simulate the RAM machine correctly, in a way that the distribution of the view of the server, i.e. the actual access pattern, would look independent of the virtual access pattern.

\begin{definition}[ORAM simulation, informal]
	Let \textsf{RAM} be a RAM machine. We say that a (probabilistic) RAM machine \textsf{ORAM} is an \emph{oblivious RAM simulation} of \textsf{RAM}, if \begin{enumerate*}[label=(\roman*)]
		\item (correctness) for any virtual access pattern $\vec{y}:=((v_1,x_1),\dots,(v_\ell,x_\ell))$, the output of \textsf{RAM} and \textsf{ORAM} at the end of the client-server interaction is equal with probability $\geq 1-negl(\ell)$, and
		\item (security) for any two virtual access patterns, $\vec{y},\vec{z}$, of length $\ell$, the corresponding distribution of the actual access patterns produced by \textsf{ORAM}, denoted $\vec{\tilde{y}}$ and $\vec{\tilde{z}}$, are computationally indistinguishable.
	\end{enumerate*}
\end{definition}

An alternative interpretation of the security requirement is as follows: the view of the server, during an ORAM simulation, can be simulated in a way that is indistinguishable from the actual view of the server, given only $\ell$.

\paragraph{Distributed Oblivious RAM.} A \emph{distributed oblivious RAM simulation} is the analogue of ORAM simulation in the multi-server setting. To simulate a RAM machine, the client now communicates with $m$ semi-honest servers, rather than with a single server only.
\begin{definition}[Distributed ORAM Simulation, informal]\label{def:m-oram}
	An \emph{$(m,t)$-ORAM simulation} ($0<t<m$) is an oblivious RAM simulation of a RAM machine, that is invoked by a CPU client and $m$ remote storage servers, and that is private against a collusion of $t$ corrupt servers. Namely, for any two actual access patterns $\vec{y},\vec{z}$ of length $\ell$, the corresponding combined view of any $t$ servers during the ORAM simulation (that consists of the actual access queries made to the $t$ servers) are computationally indistinguishable.
\end{definition}
With the involvement of more servers, we can hope to achieve schemes that are more efficient as well as schemes that protect against collusions of servers.

\paragraph{Parameters and Complexity Measures.} The main complexity measure in which ORAM schemes compete is the \emph{bandwidth overhead} (or, shortly, \emph{overhead}). When the ORAM protocol operates in the ``balls and bins" manner\cite{GO}, where the only type of data exchanged between the client and servers is actual memory blocks, it is convenient to define the overhead as the amount of actual memory blocks that are queried in the ORAM simulation to simulate a virtual query to a single block. However, in general, overhead is defined as the blowup in the number of information bits exchanged between the parties, relative to a non-oblivious execution of the program. Following the more general definition, the overhead is sometimes a function of the block size $B$. Clearly, we aim to achieve a small asymptotic overhead with block size as small as possible.

Other metrics include the size of the server storage and the client's local memory (in blocks), and the amount and type of the computations performed by the servers (e.g. simple arithmetics vs. heavy cryptography). We note that all of these notions are best defined in terms of overhead, compared to a non-oblivious execution of the program, e.g. storage overhead, computational overhead, etc..

\subsection{Private Information Retrieval}
\emph{Private information retrieval (PIR)}\cite{PIR} is a cryptographic primitive that allows a client to query a database stored in a remote server, without revealing the identity of the queried data block. Specifically, an array of $n$ blocks $X=(x_1,\dots,x_n)$ is stored in a server. The client, with input $i\in[n]$, wishes to retrieve $x_i$, while keeping $i$ private. PIR protocols allow the client to do that while minimizing the number of bits exchanged between the client and server. PIR is studied in two main settings: single-server PIR, where the database is stored in a single server, and the multi-server setting, where the database is replicated and stored in all servers, with which the client communicates simultaneously. More specifically, an $(m,t)$-PIR is a PIR protocol that involves $m>1$ servers and that is secure against any collusion of $t<m$ servers. It was shown in\cite{PIR} that non-trivial single-server PIRs cannot achieve information-theoretic security. This is made possible with two servers already in multi-server PIR. Moreover, many known two-server PIRs (both information theoretic and computational, e.g.\cite{PIR,PIRBIW,PIRDG,FSS}) do not involve heavy server-side computation, like homomorphic encryption or number theoretic computations, as opposed to known single-server protocols (e.g.\cite{PIRKO,PIRCMS,PIRGR}).

\section{A Simple Four-Server ORAM with Constant Overhead}\label{sec:4servers}
We present our four-server ORAM protocol with constant bandwidth overhead and linear server-side computational overhead. The protocol bypasses the need for symmetric encryption as it secret-shares the data among the servers. We use distributed point functions~\cite{DPF} (see Section~\ref{sec:dpf} below) as a building block.

\begin{theorem}[Four-server ORAM]\label{thrm:four}
	Assume the existence of a two-party DPF scheme for point functions $\{0,1\}^n\to\{0,1\}^m$ with share length $\Lambda(n,m)$ bits. Then, there exists a $(4,1)$-ORAM scheme with linear\footnote{Up to polylogarithmic factors.} server-side computation overhead and bandwidth overhead of $O(\Lambda(\log N,B)/B)$ for a block size of $B=\Omega(\Lambda(\log N,1))$.
\end{theorem}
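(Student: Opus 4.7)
\emph{Setup and read.} Fix an abelian group structure on blocks (for concreteness, $\{0,1\}^B$ under XOR) and additively secret-share the database as $D = D^{(1)} \oplus D^{(2)}$. Store identical copies of $D^{(1)}$ on both $S_1$ and $S_2$, and identical copies of $D^{(2)}$ on both $S_3$ and $S_4$; the ``read-pairs'' $\{S_1,S_2\}$ and $\{S_3,S_4\}$ each hold a replicated additive share. To read at address $v$, the client samples a DPF key pair for the point function $\delta_v:\{0,1\}^{\log N}\to\{0,1\}$ that indicates $v$, sends one key to $S_1$ and the other to $S_2$, and has each server return $\bigoplus_j q_i(j)\cdot D^{(1)}_j$ (bit-scalar-times-block, XORed over $j$); the two responses XOR to $D^{(1)}_v$. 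An independent DPF invocation on $\{S_3,S_4\}$ yields $D^{(2)}_v$, and the client outputs $D^{(1)}_v \oplus D^{(2)}_v = D_v$.

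\emph{Write.} Following the ``access $=$ read then write'' convention of Section~\ref{sec:model}, a write of $x$ at $v$ uses the just-recovered $D_v$ to set $\Delta = x \oplus D_v$ (taking $\Delta=0$ for a pure read). The client samples a fresh DPF key pair $(g_1,g_2)$ for the full-block point function $\delta_{v,\Delta}:\{0,1\}^{\log N}\to\{0,1\}^B$, sends $g_1$ to both $S_1$ and $S_2$, and sends $g_2$ to both $S_3$ and $S_4$. Each server expands its key over the whole address space and XORs the result into its stored share block by block. Since $g_1 \oplus g_2 = \delta_{v,\Delta}$ as functions, servers $S_1,S_2$ both hold the common updated share $D^{(1)} \oplus g_1$, servers $S_3,S_4$ both hold $D^{(2)} \oplus g_2$, and $(D^{(1)} \oplus g_1) \oplus (D^{(2)} \oplus g_2)$ equals $D$ with $x$ written in position $v$; the read-pair replication invariant is therefore preserved. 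This realizes the ``two-instance PIR-write'' of Section~\ref{sec:4servers}: each of the cross-share pairs $\{S_1,S_3\}$ and $\{S_2,S_4\}$ receives a complete PIR-write, but both instances use the same key pair so that $\{S_1,S_2\}$ and $\{S_3,S_4\}$ stay consistent for the next read.

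\emph{Analysis.} Correctness is immediate from the above and the defining property of a DPF. For security against any single corrupt server, I would reduce to single-key pseudorandomness of the DPF: every access hands each server exactly one freshly generated DPF key (and in a read, also produces its own locally computed response), so the transcript of one $S_i$ across a length-$\ell$ sequence of accesses is a list of $O(\ell)$ independent single DPF keys. A standard hybrid argument then replaces each key one by one with a key for a fixed dummy point function, yielding a view that is computationally indistinguishable from the one arising under any fixed virtual access pattern. Bandwidth per access is $O(1)$ DPF keys of length $\Lambda(\log N,B)$ plus $O(1)$ block-sized responses, giving overhead $O(\Lambda(\log N,B)/B)$; this is $O(1)$ once $B=\Omega(\Lambda(\log N,1))$, using the standard affine bound $\Lambda(\log N,B)=O(\Lambda(\log N,1)+B)$ enjoyed by tree-based DPF constructions. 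Server-side work per access is dominated by one full-domain DPF expansion plus a linear scan of the local share, i.e.\ $\tilde{O}(N)$ light operations per server. The only subtlety to make explicit is that handing the same key $g_1$ (resp.\ $g_2$) to two different servers is safe precisely because $t=1$: a single corruption sees exactly one copy of one DPF key, so single-key indistinguishability suffices and no fresh cryptographic assumption is needed.
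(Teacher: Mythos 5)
Your construction is exactly the paper's: additively secret-share the database into two shares, replicate each share on a disjoint pair of servers, recover each share by a DPF-based two-server PIR within its replicated pair, and update by generating a single DPF key pair for the write-delta point function and distributing one key identically to each replicated pair (so that the cross-share pairs each see a well-formed PIR-write while the invariant ``same share on both servers of a pair'' is preserved). Your cost accounting ($\Lambda(\log N,1)$ for read keys, $\Lambda(\log N,B)$ for write keys, block-sized responses) and the $t=1$ security reduction to single-key DPF pseudorandomness also coincide with the paper's analysis, which indeed simply cites DPF security; the remark about the affine bound yielding $O(1)$ overhead is extra and belongs to Instantiation~\ref{inst:fss} rather than Theorem~\ref{thrm:four} itself.
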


Instantiating our scheme with the DPF from~\cite{FSS} obtains the following.

\begin{instant}\label{inst:fss}
	Assume the existence of one-way functions. Then, there exists a $(4,1)$-ORAM scheme with linear server-side computation overhead and constant bandwidth overhead for a block size of $B=\Omega(\lambda \log N)$, where $\lambda$ is a security parameter.
\end{instant}

\subsection{Building Block: Distributed Point Functions}
\label{sec:dpf}
\emph{Distributed Point Functions} (DPF), introduced by Gilboa and Ishai\cite{DPF}, are a special case of the broader cryptographic primitive called \emph{Function Secret Sharing} (FSS)\cite{FSS}. Analogous to standard secret sharing, an FSS allows a dealer to secret-share a function $f$ among two (or more) participants. Each participant is given a share that does not reveal any information about $f$. Using his share, each participant $p_i$, for $i\in\{0,1\}$, can compute a value $f_i(x)$ on any input $x$ in $f$'s domain. The value $f(x)$ can be computed by combining $f_0(x)$ and $f_1(x)$. In fact, $f(x)=f_0(x)+f_1(x)$. Distributed point function is an FSS for the class of point functions, i.e., all functions $P_{a,b}:\{0,1\}^n\to \{0,1\}^m$ that are defined by $P_{a,b}(a)=b$ and $P_{a,b}(a')=0^m$ for all $a'\neq a$. Boyle et al.\cite{FSS} construct a DPF scheme where the shares given to the parties are of size $O(\lambda n + m)$, where $\lambda$ is a security parameter, that is the length of a PRG seed. We are mainly interested in the application of DPFs to PIR and PIR-write\cite{DPF,FSS}.

\subsection{Overview}
Similarly to the schemes of\cite{DS,WGK}, we apply DPF-based PIR\cite{FSS} to allow the client to efficiently read records from a replicated data. If we allow linear server-side computational overhead, the task of oblivious reads becomes trivial by using DPFs. The remaining challenge is how to efficiently perform oblivious writes to the data.

The core idea behind the scheme is to apply DPFs not only for PIR, but also for a variant of \emph{PIR-write}. PIR-write (a variant of which was first investigated in\cite{OS}) is the write-only analog of PIR. We use DPFs to construct a simple two-server PIR-write where every server holds an additive share of the data. Our PIR-write protocol is limited in the sense that the client can only modify an existing record by some difference of his specification (rather than specifying the new value to be written). If the client has the ability to read the record in a private manner, then this limitation becomes irrelevant.

We combine the read-only PIR and the write-only PIR-write primitives to obtain a four-server ORAM scheme that enables both private reads and writes. In the setup, the client generates two additive shares of the initial data, $X^0,X^1$ s.t. $X=X^0\oplus X^1$, and replicates each of the shares. Each of the four shares obtained is given to one of the servers. For a private read, the client retrieves each of the shares $X^0,X^1$, using the DPF-based PIR protocol, with the two servers that hold the share. For a private write, the proposed PIR-write protocol is invoked with pairs of servers holding different shares of the data (see illustration in Figure~\ref{fig:4server}).

We remark that our method to combine PIR and PIR-write for ORAM is inspired by the 8-server ORAM scheme presented in\cite{OS}, in which an elementary 4-server PIR-write protocol was integrated with the PIR from\cite{PIR}.

\begin{figure}
	\centering
	\setlength{\belowcaptionskip}{-15pt}
	\includegraphics[scale=0.5]{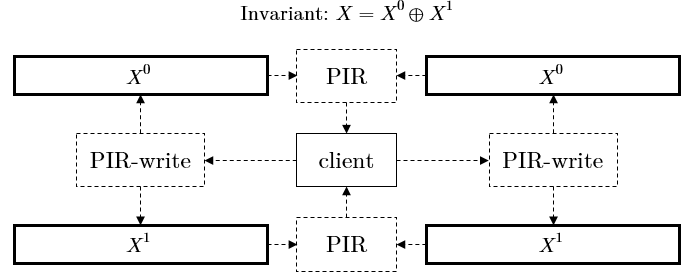}
	\caption{an illustration of the four-server construction.}
	\label{fig:4server}
\end{figure}

\subsection{Oblivious Read-only and Write-only Schemes with Two Servers}
\paragraph{Basic PIR and PIR-write.} Recall the classic two-server PIR protocol, proposed in\cite{PIR}. To securely retrieve a data block $x_i$ from an array $X=(x_1,\dots,x_N)$ that is stored in two non-colluding servers $\calS_0$ and $\calS_1$, the client generates two random $N$-bit vectors, $e^0_i$ and $e^1_i$ such that $e^0_i\oplus e^1_i=e_i$, where $e_i$ is the $i^{th}$ unit vector, and sends $e^b_i$ to $\calS_b$. In other words, the client secret-shares the vector $e_i$ among the two servers. Then, each server, computes the inner product $x^b_i:=X\cdot e^b_i$ and sends it to the client. It is easy to see that $x_i=x^0_i\oplus x^1_i$.

The same approach can be used for two-server PIR-write. However, now we require that the data is shared, rather than replicated, among the two servers. Namely, server $\calS_b$ holds a share of the data $X^b$, such that $X^0\oplus X^1=X$. In order to write a new value $\hat{x}_i$ to the $i^{th}$ block in the array, the client secret-shares the vector $(\hat{x}_i\oplus x_i)e_i$ to the two servers. Each of the servers adds his share to $X^b$, and obtains a new array $\hat{X}^b$. After this update, the servers have additive shares of $X$ with the updated value of $x_i$. Notice that we assume that the client already read and knows $x_i$; this is not standard in the PIR-write model.
\paragraph{Efficient PIR and PIR-write via DPFs.} In the heart of the PIR and PIR-write protocols described above is the secret sharing of vectors of size $N$. Applying standard additive secret sharing yields protocols with linear communication cost. Since we share a very specific type of vectors, specifically, unit vectors and their multiples, standard secret sharing is an overkill. Instead, we use DPFs. The values of a point function $P_{i,x}:[N]\to\{0,1\}^m$ (that evaluates $x$ at $i$, and zero elsewhere) can be represented by a multiple of a unit vector $v_{i,x}:=x e_i$. Hence, one can view distributed point functions as a means to ''compress'' shares of unit vectors and their multiples. We can use DPFs to share such a vector among two participants $p_0$ and $p_1$, as follows. We secret-share the function $P_{i,x}$ using a DPF scheme, and generate two shares $P^0_{i,x}$ and $P^1_{i,x}$. For $b\in\{0,1\}$, share $P^b_{i,x}$ is sent to participant $p_b$. The participants can compute their shares of the vector $v_{i,x}$ by evaluating their DPF share on every input in $[N]$. Namely, $p_b$ computes his share $v_{i,x}^b:=(P^b_{i,x}(1),\dots,P^b_{i,x}(n))$. From the correctness of the underlying DPF scheme, it holds that $v^0_{i,x}\oplus v^1_{i,x}=v_{i,x}$. Further, from the security of the DPF, the participants do not learn anything about the vector $v_{i,x}$ except the fact that it is a multiple of a unit vector. Using the DPF construction from\cite{FSS}, we have a secret sharing scheme for unit vectors and their multiples, with communication complexity $O(\lambda\log N + m)$, assuming the existence of a PRG $G:\{0,1\}^\lambda \to \{0,1\}^m$.

\subsection{Construction of Four-Server ORAM}
\paragraph{Initial Server Storage.} Let $\calS_0^0,\calS_1^0,\calS_0^1$ and $\calS_1^1$ be the four servers involved in the protocol. Let $X=(x_1,\dots,x_N)$ be the data consisting of $N$ blocks, each of size $B=\Omega(\Lambda(\log N,1))$ bits. In initialization, the client generates two additive shares of the data, $X^0=(x^0_1,\dots,x^0_N)$ and $X^1=(x^1_1,\dots,x^1_N)$. That is, $X^0$ and $X^1$ are two random vectors of $N$ blocks, satisfying $X^0\oplus X^1=X$. For $b\in\{0,1\}$, the client sends $X^b$ to both $\calS_0^b$ and $\calS_1^b$. Throughout the ORAM simulation, we maintain the following invariant: for $b\in\{0,1\}$, $\calS_0^b$ and $\calS_1^b$ have an identical array $X^b$, such that $X^0$ and $X^1$ are random additive shares of $X$.

\paragraph{Query Protocol.} To obliviously simulate a read/write query to the $i^{th}$ block in the data, the client first reads the value $x_i$ via two PIR queries: a two-server PIR with $\calS_0^0$ and $\calS_1^0$ to retrieve $x_i^0$, and a two-server PIR with $\calS_0^1$ and $\calS_1^1$ to retrieve $x_i^1$. The client then computes $x_i$ using the two shares. Second, to write a new value $\hat{x}_i$ to the data (which can possibly be equal to $x_i$), the client performs two \emph{identical} invocations of two-server PIR-write, each with servers $\calS_b^0$ and $\calS_b^1$ for $b\in\{0,1\}$. It is important that $\calS_0^b,\calS_1^b$ (for $b\in\{0,1\}$) receive an identical PIR-write query, since otherwise, they will no longer have two identical replicates.
\subsection{Analysis}
The security of the scheme follows directly from the security of the underlying DPF protocol from\cite{FSS}. It remains to analyze the bandwidth cost. To simulate a query, the client sends each of the servers two DPF shares: one for reading of length $\Lambda(\log N,1)$ bits, and another for writing of length $\Lambda(\log N, B)$. With a block size of $B=\Omega(\Lambda(\log N,1))$ this translates to $O(\Lambda(\log N, B)/B)$ bandwidth overhead. Each of the servers, in return, answers by sending two blocks.

\section{The Balanced Hierarchical ORAM Framework}\label{sec:klo}

In this section, we lay the groundwork for our constructions in the standard distributed ORAM model, that are presented later in Sections~\ref{sec:3servers} and~\ref{sec:2multiservers}.

\subsection{Main Building Block: Hashing}
Hashing, or more accurately, oblivious hashing, has been a main building block of hierarchical ORAM schemes since their first appearance in\cite{O}. Various types of hashing schemes, each with different parameters and properties, were plugged in ORAM constructions in an attempt to achieve efficient protocols (e.g.\cite{GO,GM,CGLS}). Hashing stands at the heart of our constructions as well. However, since we make a generic black-box use of hashing, we do not limit ourselves to a specific scheme, but rather take a modular approach.

We consider an \emph{$(n,m,s)$-hashing scheme}\footnote{Implicitly stated parameters may be omitted for brevity.}, $H$, to be defined by three procedures: $\gen$ for key generation, $\build$ for constructing a hash table $T$ of size $m$ that contains $n$ given data elements, using the generated key, and $\lookup$ for querying $T$ for a target value. The scheme may also use a stash to store at most $s$ elements that could not be inserted into $T$. In a context where a collection of hashing schemes operate simultaneously (e.g. ORAMs), a \emph{shared stash} may be used by all hash tables. We denote by $C_\build(H)$ and $C_\lookup(H)$, the build-up complexity and the query complexity of $H$ (resp.) in terms of communication (in the client-server setting).

An \emph{oblivious} hashing scheme is a scheme whose $\build$ and $\lookup$ procedures are oblivious of the stored data and the queried elements (respectively). In Appendix~\ref{app:hashing}, we provide formal definitions and notation for the above, and survey a few of the schemes that were used in prior ORAM works.

\subsection{Our Starting Point: The Single-server Scheme from\cite{KLO}}

\subsubsection{Overview.} The starting point of our distributed ORAM constructions in Sections~\ref{sec:3servers} and~\ref{sec:2multiservers} is the single-server scheme from\cite{KLO}. In standard hierarchical ORAMs, the server stores the data in $\log N$ levels, where every level is a hash table, larger by a factor of 2 than the preceding level. Kushilevitz et al. changed this by having $L=\log_d N$ levels, where the size of the $i^{th}$ level is proportional to $(d-1)\cdot d^{i-1}$. Having less levels eventually leads to the efficiency in overhead, however, since level $i+1$ is larger by a factor of $d$ (no longer constant) than level $i$, merging level $i$ with level $i+1$ becomes costly (shuffling an array of size $(d-1)\cdot d^{i}$ every $(d-1)\cdot d^{i-1}$ queries). To solve this problem, every level is stored in $d-1$ separate hash tables of equal size in a way that allows us to reshuffle every level into a single hash table in the subsequent level.

\begin{theorem}[\hspace{1sp}\cite{KLO,CGLS}]\label{thrm:klo}
	Let $d$ be a parameter, and define $L=\log_d N$. Assume the existence of one-way functions, and a collection $\{H_i\}^L_{i=1}$, where $H_i$ is an oblivious $(d^{i-1}k,\cdot,\cdot)$-hashing scheme, with a shared stash of size $s$. Then there exists a single-server ORAM scheme that achieves the following overhead for block size $B=\Omega(\log N)$.
	\[
	O\left(k+s+\sum_{i=1}^L d\cdot C_{\lookup}(H_i)+\sum_{i=1}^L \frac{C_\build(H_i)}{d^{i-1}k}\right)
	\]
\end{theorem}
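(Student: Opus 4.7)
The plan is to prove the theorem by following the balanced-hierarchical paradigm of KLO, with each oblivious hashing scheme $H_i$ used as a black box at its level. First I would fix the data structure: a small top cache of capacity $k$, a shared stash of capacity $s$, and $L=\log_d N$ levels, where level~$i$ consists of $d-1$ independent hash tables, each of capacity $d^{i-1}k$ and built via the scheme $H_i$ with a freshly generated key from $H_i.\gen$. The invariant I would maintain is that the most recent copy of every logical block sits either in the cache, in the stash, or in one of these hash tables, and that earlier levels always hold fresher copies than later ones.

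Next I would specify the access procedure. To access block~$v$, the client linearly scans the cache and the stash in full, and for every level~$i$ and every one of the $d-1$ tables at level~$i$ invokes $H_i.\lookup(v)$; after the first hit, all subsequent lookups are made on a dummy address so that the server sees exactly one $\lookup$ invocation per table regardless of outcome. The retrieved (possibly updated) block is written back into the cache. For reshuffles, every $k$ accesses the cache is rebuilt as a new $H_1$-table at level~$1$; more generally, when the $d-1$ tables at level~$i$ have accumulated, they are merged with an incoming level-$i$ table and rebuilt into a single $H_{i+1}$-table of size $d^{i-1}k$ at level~$i+1$ via $H_{i+1}.\build$, after which level~$i$ is cleared; elements that $H_{i+1}.\build$ cannot accommodate spill into the shared stash. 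Correctness is immediate from the freshness invariant.

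Obliviousness then follows by a simulator that runs the outer schedule (which is deterministic in the access count), invokes $H_i.\gen$ afresh every time a level is rebuilt, and replaces each $\lookup$/$\build$ call by the simulator guaranteed by obliviousness of $H_i$. The standard ``one query per key'' invariant is what must be verified here: no hash table is ever looked up twice on the same virtual address, because after the first hit the client switches to dummy queries and, before the same address can recur, the table itself is destroyed in a downward merge. One-way functions supply the PRF-based keys used for fresh $\gen$ invocations and for padding dummy traffic, and a hybrid argument across the at most $O(L)$ hashing instances active at any time lifts the per-scheme obliviousness to the whole construction. I expect this seam -- verifying that the dummy-lookup convention, the fresh-key regeneration across reshuffles, and the shared stash interact cleanly enough to invoke each $H_i$'s guarantee in isolation -- to be the main obstacle.

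Finally I would account bandwidth per virtual query. A query costs $k+s$ blocks to scan the cache and stash linearly, and $\sum_{i=1}^L (d-1)\,C_\lookup(H_i)$ for the lookups across the hierarchy. Amortized, level~$i$ contributes $C_\build(H_i)/(d^{i-1}k)$ per query, since a single table of capacity $d^{i-1}k$ is rebuilt there exactly once every $d^{i-1}k$ accesses. Using $B=\Omega(\log N)$ so that addresses and bookkeeping metadata fit within constant block-overhead, summing these terms yields the stated overhead bound.
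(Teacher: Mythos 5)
Your proposal follows essentially the same balanced-hierarchy approach as the paper's Section~\ref{sec:klo} (which sketches the KLO/CGLS construction), with the correct per-level amortization $C_\build(H_i)/(d^{i-1}k)$ and the same ``one query per key'' obliviousness invariant that the paper records as Claim~\ref{claim:queriedonce}. The only slip is cosmetic: the table produced by $H_{i+1}.\build$ at level~$i+1$ should have capacity $d^{i}k$ (since $H_{i+1}$ is a $(d^{i}k,\cdot,\cdot)$-hashing scheme and receives the $d\cdot d^{i-1}k$ records from level $i$ and below), not $d^{i-1}k$; everything else in your accounting is already consistent with this correction.
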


A special variant of the theorem was proven by Kushilevitz et al.\cite{KLO}. In their work, they use a well-specified collection of hashing schemes (consisting of both standard and cuckoo hashing\cite{CUCKOO}), and obtain an overhead of $O(\log^2 N/\log\log N)$. The modular approach to hierarchical ORAM was taken by Chan et al.\cite{CGLS}, in light of their observations regarding the conceptual complexity of cuckoo hashing, and their construction of a simpler oblivious hashing scheme that achieves a similar result. Our results in the distributed setting fit perfectly in this generic framework, as they are independent of the underlying hashing schemes. Below, we elaborate the details of the construction from\cite{KLO}, as a preparation towards the following sections.

\paragraph{Data Structure.} The top level, indexed $i=0$, is stored as a plain array of size $k$. As for the rest of the hierarchy, the $i^{th}$ level ($i=1\dots L$) is stored in $d-1$ hash tables, generated by an oblivious $(d^{i-1}k,\cdot,\cdot)$-hashing scheme $H_i$. For every $i=1,\dots,L$ and $j=1,\dots,d-1$, let $T^{j}_i$ be the $j^{th}$ table in the $i^{th}$ level, and let $\kappa^j_i$ be its corresponding key. All hashing schemes in the hierarchy share a stash $S$\footnote{In the scheme of\cite{KLO}, the shared stash is 'virtualized', and is re-inserted into the hierarchy. We intentionally roll-back this optimization in preparation to our distributed constructions.}. The keys $\kappa^j_i$ can be encrypted and stored remotely in the server. Also, the client stores and maintains a counter $t$ that starts at zero, and increments by one after every virtual access is simulated. The ORAM simulation starts with the initial data stored entirely in the lowest level.

\paragraph{Blocks Positioning Invariant.} Throughout the ORAM simulation, every data block in the virtual memory resides either in the top level, or in one of the hash tables in the hierarchy, or in the shared stash. The blocks are hashed according to their virtual addresses. The data structure does not contain duplicated records.

\paragraph{Blocks Flow and Reshuffles.} Once a block is queried, it is inserted into the top level, therefore the level fills up after $k$ queries. Reshuffles are used to push blocks down the hierarchy and prevent overflows in the data structure. Basically, every time we try to insert blocks to a full level, we clear the level by reshuffling its blocks to a lower level. For instance, the top level is reshuffled every $k$ queries.

In every reshuffle, blocks are inserted into the first empty hash table in the highest level possible, using the corresponding $\build$ procedure, with a freshly generated key. Thus, the first time the top level is reshuffled (after round $k$), its blocks are inserted to the first table in the next level, i.e. $T^1_1$, which becomes full. The top level fills up again after $k$ queries. This time, the reshuffle is made to $T^2_1$, as $T^1_1$ is not empty anymore. After $d-1$ such reshuffles, the entire first level becomes full, therefore, after $d\cdot k$ queries, we need to reshuffle both the top level and the first level. This time, we insert all blocks in these levels into $T^1_2$.

Observe that this mechanism is analogous to the process of counting in base $d$: every level represents a digit, whose value is the number of full hash tables in the level. An increment of a digit with value $d-1$, equivalently - insertion to a full level, is done by resetting the digit to zero, and incrementing the next digit by 1, that is, reshuffling the level to a hash table in the next level. A demonstration of this analogy is given in Figure~\ref{fig:flow}. Using this observation, we formalize the process as follows: in every  round $t=t'\cdot k$, levels $0,\dots,i$ are reshuffled down to hash table $T^j_{i+1}$, where $i$ is the maximal integer for which $d^{i}\mid t'$, and $j=(t' \bmod d^{i+1}) / d^{i}$. Notice that level $i$ is reshuffled every $k\cdot d^{i}$ queries.

\begin{figure}
	\setlength{\belowcaptionskip}{-15pt}
	\includegraphics[width=\linewidth]{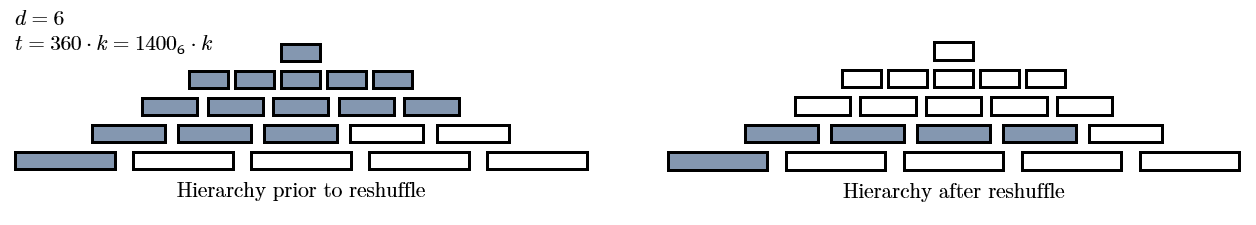}
	\caption{a demonstration of the flow of blocks during an ORAM simulation with $d=6$. A gray cell indicates a full hash table, a white one is an empty table.}
	\label{fig:flow}
\end{figure}

\paragraph{Query.} In order to retrieve a data block with virtual address $v$, the client searches for the block in the top level and the stash first. Then, for every level $i$, the client scans hash tables $T^j_i$ using $H_i.\lookup$ procedure, \emph{in reverse order}, starting with the table that was last reshuffled into. Once the target block was found, the scan continues with dummy queries. This is important for security (see Claim~\ref{claim:queriedonce}). Appendix~\ref{app:pseudocode} contains a detailed pseudo-code for the query algorithm.

\section{A Three-server ORAM Scheme}\label{sec:3servers}
Below, we formally state our first result in the standard distributed ORAM model: an efficient three-server ORAM scheme.
\begin{theorem}[Three-server ORAM using regular hashing]\label{thrm:three}
	Let $d$ be a parameter, and define $L=\log_d N$. Assume the existence of one-way functions, and a collection $\{H_i\}^L_{i=1}$, where $H_i$ is a $(d^{i-1}(k+s),m_i,s)$-hashing scheme. Then, there exists a $(3,1)$-ORAM scheme that achieves an overhead of
	\[
	O\left(k+L + \sum_{i=1}^L \frac{m_i}{d^{i-1}k}\right)
	\]	
	for block size $B=\Omega(\alpha d\log N+s\log d)$, where $\alpha:=\max_i C_\lookup(H_i)$.
\end{theorem}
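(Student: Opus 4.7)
The plan is to lift the single-server scheme of Theorem~\ref{thrm:klo} to three servers by plugging in two known tools in a compatible way. First, I would replace every plain lookup inside a level by a two-server PIR invocation between the two servers that hold identical replicas of that level, so that the $d\cdot C_\lookup(H_i)$ blocks downloaded per level in~\cite{KLO} are compressed into a single PIR whose query side fits into one block of size $B=\Omega(\alpha d \log N + s \log d)$. Second, I would replace the single-server oblivious rebuild by the two-server reshuffle of Lu and Ostrovsky~\cite{LO}, using the third server as its helper so that no single server sees both the access pattern to a level and the permutation used to construct it. Three servers are exactly enough to accommodate both tools simultaneously, since two-server PIR requires a replicated pair while the LO-reshuffle requires a helper disjoint from that pair.

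The hierarchy is the same as in Theorem~\ref{thrm:klo}: a top-level array of size $k$, a shared stash of size $s$, and $L=\log_d N$ deeper levels holding up to $d-1$ hash tables each. Each level $i$, as well as the top level and the stash, is replicated on a fixed pair $P(i)\subset\{S_1,S_2,S_3\}$, chosen by a simple rotation so that the outsider $Q(i)\notin P(i)$ is well-defined. To serve a query for address $v$, the client scans the top level and the stash (costing $O(k)$ blocks since the stash fits in one block by the $s\log d$ slack in $B$) and, for each level $i\ge 1$, locally evaluates $H_i.\lookup$ against every non-empty table to produce a list of at most $d\cdot C_\lookup(H_i)\le d\alpha$ candidate positions. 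The client then runs a single two-server PIR between the servers of $P(i)$ to fetch either the target block (at the first level where it is located) or a dummy block thereafter, exactly imitating the dummy handling of~\cite{KLO}. Sending the $O(d\alpha\log N)$ query bits fits into one block by hypothesis, and the PIR response is one block, so each level contributes $O(1)$ blocks, giving $O(k+L)$ blocks per query overall.

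When the counter triggers a reshuffle of levels $0,\dots,i$ into table $T^j_{i+1}$, the three-party reshuffle proceeds as in~\cite{LO}: the source blocks held by $P(0),\ldots,P(i)$ are forwarded in re-randomised form to the helper $Q(i+1)$, who obliviously permutes them and relays them to $P(i+1)$; the two servers of $P(i+1)$ then run $H_{i+1}.\build$ with a fresh key $\kappa^j_{i+1}$ and each store an identical replica of the new table. Each reshuffle of level $i$ moves $O(m_i)$ blocks end-to-end and is triggered once per $k d^{i-1}$ queries, contributing $O(m_i/(d^{i-1}k))$ to the amortised per-query overhead; summing over $i$ yields the last term of the stated bound, with $C_\build(H_i)$ in Theorem~\ref{thrm:klo} replaced by the level size $m_i$.

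Security is established by exhibiting a simulator for the view of each single server $S_b$. On levels that $S_b$ stores, PIR queries are replaced by uniformly random PIR queries, with indistinguishability following from two-server PIR security; on levels that $S_b$ does not store, $S_b$ sees no access pattern at all; during each reshuffle, $S_b$ plays either the role of a source-pair member (seeing only re-randomised outgoing data), of the helper $Q(i+1)$ (seeing a freshly permuted batch by obliviousness of the LO-reshuffle), or of a destination in $P(i+1)$ (obliviously running $H_{i+1}.\build$). The main obstacle I anticipate is choosing the rotation $P(\cdot)$ so that every reshuffle event, in particular the simultaneous merge of several consecutive lower levels into a higher one, admits a well-defined outsider $Q(i+1)$ whose past view does not break the obliviousness of the new table; this will reduce to maintaining the invariant that $Q(i+1)\notin P(i+1)$ for every $i$ together with a finite case analysis on $t$ modulo a small constant, but it must be executed carefully so that the pattern is consistent across the lifetime of the ORAM and across consecutive queries that reach different levels.
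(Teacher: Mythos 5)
Your high-level plan matches the paper's: keep the balanced hierarchy of~\cite{KLO}, replace per-level reads with two-server PIR over a replicated pair, and replace the single-server oblivious rebuild with an LO-style reshuffle using the third server so that no single server sees both the access pattern to a level and its construction. The rotation of level-holding pairs is also right, and your worry about choosing $P(\cdot)$ is unfounded---for any level, $Q(i)$ is just the unique server not in $P(i)$, with no consistency obstacle. The gaps are in how you instantiate the reshuffle, and they are fatal to the claimed bound.

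You wrote that the helper $Q(i{+}1)$ merely \emph{permutes} the blocks and relays them to $P(i{+}1)$, and that the pair $P(i{+}1)$ then runs $H_{i+1}.\build$ to construct the new table. This is the wrong division of labour and it contradicts the theorem's hypothesis. First, the theorem assumes only a \emph{regular} $(d^{i-1}(k+s),m_i,s)$-hashing scheme, not an oblivious one. If the pair $P(i{+}1)$ runs a non-oblivious $\build$, each member of the pair learns the bucket layout of the new table; since $P(i{+}1)$ is also the pair serving lookups to that level (the positions $H_{i+1}.\lookup(\cdot)$ are revealed in the clear when the client downloads headers before the PIR), they can later tell real lookups from dummy ones, breaking security. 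Second, if you instead make $\build$ oblivious to fix this, you reintroduce exactly the $C_\build(H_i)/(d^{i-1}k)$ cost the scheme is designed to shed, and you lose the stated $O(k+L+\sum m_i/(d^{i-1}k))$ overhead. The paper's construction instead sends all source blocks to $Q(i{+}1)$, which runs $\build$ \emph{locally} (a non-interactive computation with no communication cost, which is precisely why the hashing need not be oblivious), and then streams the finished table and stash back through the client to $P(i{+}1)$; this is why $C_\build$ disappears from the overhead and only $m_i$ survives. Security then rests on the invariant that the server which knows the hash layout of a level ($Q(i)$) never observes queries to that level.

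You also omit the PRF-tag mechanism that the paper uses and which is load-bearing. The blocks reshuffled through $Q(i{+}1)$ cannot carry their virtual addresses in clear (otherwise $Q(i{+}1)$ learns the identity of every block being rebuilt); the client re-tags each block with $F_s(i{+}1,j,e^j_{i+1},v)$ under a fresh per-table epoch, and $Q(i{+}1)$ hashes by tag. Uniqueness of tags (Claim~\ref{claim:3uniquetags}) and freshness of epochs are what make the reshuffled batch look like encrypted records with random distinct tags from $Q(i{+}1)$'s point of view. Without re-tagging, ``re-randomised form'' is only re-encryption of ciphertexts, which hides the payload but not the address, and the reshuffle leaks. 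Together with the build-at-the-helper correction above, adding this step brings your sketch in line with the intended scheme.
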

We propose two different instantiations of our construction, each with a different collection of hashing schemes that was used in prior ORAM works\cite{GM,KLO,CGLS}. Both instantiations yield sub-logarithmic overhead, and their parameters are very close. However, Instantiation~\ref{inst:three-cgls} may be conceptually simpler (due to\cite{CGLS}). More details about the used hashing schemes can be found in Appendix~\ref{app:hashing}.

First, we plug in the collection of hashing schemes used by Goodrich and Mitzenmacher\cite{GM}, and later by Kushilevitz et al.\cite{KLO}. The collection mainly consists of cuckoo hashing schemes, however, since stashed cuckoo hashing was shown to have a negligible failure probability only when the size of the hash table is polylogarithmic in $N$ (specifically, $\Omega(\log^7 N)$)\cite{GM}, standard hashing with bucket size $\log N/\log\log N$ is used in the first $\Theta (\log_d \log N)$ levels. We point out that in both mentioned works\cite{GM,KLO}, the stash size for cuckoo hashing is logarithmic. In our instantiation, we use a stash of size $\Theta(\omega(1)\cdot\log N/\log\log N)$. Although \cite{GM} proved that failure probability is negligible in $N$ when the stash is of size $s=\Theta(\log N)$ and the size of the table is $m=\Omega(\log^7 N)$ (by extending the proof for constant stash size from\cite{CUCKOOSTASH}), their proof works whenever the value $m^{-\Theta(s)}$ is negligible in $N$, and in particular, when we choose $s=\Theta(\omega(1)\cdot\log N/\log\log N)$.

\begin{instant}[Three-server ORAM using cuckoo hashing]\label{inst:three-klo}
	Assume the existence of one-way functions. Let $d$ be a parameter at most polylogarithmic in $N$. Then, there exists a three-server ORAM scheme that achieves overhead of $O(\log_d N\cdot\omega(1))$ for $B=\Omega(d\log N)$.
\end{instant}

When $d=\log^\epsilon N$ for a constant $\epsilon\in(0,1)$, we achieve an overhead of $O(\omega(1)\cdot\log N/\log\log N)$ with $B=O(\log^{1+\epsilon} N)$.

Alternatively, we can use the simple two-tier hashing scheme from\cite{TWOTIER}, with buckets of size $\log^{0.5+\epsilon}N$, to achieve the following parameters.

\begin{instant}[Three-server ORAM using two-tier hashing]\label{inst:three-cgls}
	Assume the existence of one-way functions. Let $d$ be a parameter at most polylogarithmic in $N$. Then, there exists a three-server ORAM scheme that achieves overhead of $O(\log_d N)$ for block size $B=\Omega(d\log^{1.5+\epsilon} N)$.
\end{instant}
For $d=\log^\epsilon N$ , we obtain an overhead of $O(\log N/\log\log N)$ with $B=O(\log^{1.5+2\epsilon} N)$.

\subsection{Overview}
Our three-server scheme is based on the single-server balanced hierarchical structure of Kushilevitz et al.\cite{KLO} (described in Section~\ref{sec:klo}). We take advantage of the existence of multiple servers and reduce the overhead as follows.

\paragraph{Reduce query cost using PIR.} One of the consequences of balancing the hierarchy is having multiple hash tables in a level, in any of which a target block can reside. More specifically, if $T^1_i,\dots,T^{d-1}_i$ are the hash tables at level $i$, then a block with address $v$ can possibly reside in any of the positions in $T^j_i[H_i.\lookup(v,\kappa^j_i)]$ for $j=1,\cdots, d-1$. To retrieve such a block, we could basically download all blocks in these positions, i.e. $\sum_{i=1}^L(d-1)C_\lookup(H_i)$ blocks in total. This already exceeds the promised overhead. Instead, we use PIR to extract the block efficiently without compromising the security of the scheme. For every level $i$, starting from the top, we invoke a PIR protocol over the array that consists of the $(d-1)C_\lookup(H_i)$ possible positions for $v$ in the level.

Performing PIR queries requires that the client knows the exact position of the target block in the queried array, namely, in which bucket, out of the $d-1$ possibilities, block $v$ resides, if at all. Therefore, the client first downloads the addresses of all blocks in the array, and only then performs the PIR query. Although some PIR protocols in the literature (e.g.\cite{FSS}) do not impose this requirement, we still need to download the addresses since it is essential for the security of the protocol that the client re-writes the address of the queried block.

An address of a block can be represented using $\log N$ bits. Thus, downloading the addresses of all possible positions in all levels costs us $\sum_{i=1}^L(d-1)C_\lookup(H_i)\log N$ bits of communication. If we choose $B=\Omega(\alpha d\log N)$ for $\alpha=\max_i C_\lookup(H_i)$, this cost translates to the desirable $O(L)$ overhead. Two-server PIRs work in the model where the data is replicated and stored in two non-colluding servers. Thus, every level in the hierarchy, except the top level, will be stored, accessed, and modified simultaneously in two of the three servers.
\paragraph{Reduce reshuffles cost by bypassing oblivious hashing.} We use a variant of the reshuffle procedure suggested by Lu and Ostrovsky\cite{LO}. Their protocol  works in a model with two non-colluding servers, where one server stores the odd levels, and the other stores the even levels. Before reshuffling a level, the servers gather all blocks to be reshuffled, permute them randomly, and exchange them through the client, who re-encrypts them and tags them with pseudorandom tags. The level is then reshuffled by one server using some \emph{regular} hashing scheme (not necessarily oblivious), and is sent to the other server, record by record, through the client. The security of their scheme follows from the following observations: \begin{enumerate}[label=(\roman*),topsep=0pt]
	\item the blocks are re-encrypted and permuted randomly before the reshuffle, eliminating any dependency on prior events,
	\item the blocks are hashed according to pseudorandom tags, hence their order is (computationally) independent of their identities,
	\item\label{item:losec2} the server that holds a level cannot distinguish between dummy queries and real ones since he was not involved in the reshuffle, and
	\item \label{item:losec3} the server that reshuffles the level (and can tell a dummy query) does not see the accesses to the level at all.
\end{enumerate} Applying this method naively when each of the servers holds the entire hierarchy might reveal information about the access pattern since~\ref{item:losec2} and~\ref{item:losec3} no longer hold. Therefore, we should adapt their method wisely, while having two replicates of every level, to allow performing PIR queries. A straight forward implementation would require four servers: two holding replicates of the odd levels, and two holding replicates of the even levels. However, this can be done using three servers only by having every pair of servers (out of the three possible pairs) hold every third level.
\subsection{Full Construction}
\subsubsection{Data Structure.}
The data is virtually viewed as an array of $N$ blocks, each of size $\Omega(\alpha d\log N)$ bits. Every block therefore has a virtual address in $[N]$. 

\paragraph{Distributed Server Storage.} The data structure is identical to that from\cite{KLO}, however, our scheme uses three servers, $\calS_0,\calS_1$, and $\calS_2$, to store the data. The top level is stored in all servers. Every other level is held by two servers only: for $j=0,\dots,\floor{\frac{L}{3}}$, $\calS_0$ and $\calS_1$ share replicates of levels $i=3j$, $\calS_1$ and $\calS_2$ share replicates of levels $3j+1$, and $\calS_2$ and $\calS_0$ both hold all levels $i=3j+2$.

\paragraph{Dummy Blocks.} Dummy blocks are blocks that are not ''real'' (not part of the virtual memory), but are treated as such, and assigned dummy virtual addresses. From the point of view of the 'reshuffler' server, a dummy block, unlike an empty block, cannot be distinguished from a real block. We use two types of dummy blocks, both essential for the security of the scheme.
\begin{enumerate}[label=(\roman*),topsep=0pt]
	\item \emph{Dummy Hash Blocks}.  Dummy hash blocks replace real blocks once they are read and written to the top level. The security of our scheme relies on the fact that all blocks in the hierarchy are of distinct addresses, hence the importance of this replacement.
	\item \emph{Dummy Stash Blocks.} Dummy stash blocks are created by the client to fill in empty entries in the hierarchy. Since our scheme uses a stash to handle overflows, the number of blocks in the stash and in each of the hash tables is not deterministic and is dependent on the access pattern. To hide this information from the server that performs the reshuffling of a level, we fill all empty entries in the stash, and some of the empty entries in the hash tables, with dummy stash blocks.
\end{enumerate}

\paragraph{Block Headers.} To properly manage the data, the client needs to know the identity of every block it downloads (i.e. its virtual address). Therefore, every entry in the server storage contains, besides the data of the block, a header that consists of the virtual address of the block, which can be either an address in $[N]$, a numbered dummy address, such as '$dummyHash\circ t$' or '$dummyStash\circ r$', or just '$empty$'. The length of the header is $O(\log N)$ bits, thus does not affect the asymptotic block size. Unless explicitly stated otherwise, the headers are downloaded, uploaded and re-encrypted together with the data. An entry with a block of virtual address $v$ and data $x$ is denoted by the tuple $(v,x)$.

\paragraph{Tags.} Since we use the servers for reshuffling the levels, we wish to hide the virtual addresses of the blocks to be reshuffled. We use pseudorandom tags to replace these addresses, as first suggested in\cite{LO}. The tags are computed using a keyed PRF, $F_s$, that is known to the client only. When generating a new hash table, the server hashes the blocks according to their tags (rather than their virtual addresses). Furthermore, to eliminate any dependency between tags that are seen in different reshuffles, the client keeps an \emph{epoch} $e^j_i$ for every hash table $T^j_i$ in the hierarchy. The epoch of a table is updated prior to every reshuffle, and is used, together with $i$ and $j$, to compute fresh tags for blocks in the table. The epochs can be stored remotely in the servers to avoid large client storage.
\subsubsection{Protocol.} Again, we refer to the balanced hierarchy of\cite{KLO} (see Algorithm~\ref{algo:1query}) as our starting point.

\paragraph{Query.} We replace the reads performed by the client with PIR protocols that are executed over arrays in the data. Specifically, the first PIR is performed over the stash to retrieve the target block if it is found there. The top level can be downloaded entirely since it has to be re-written anyway. The search continues to the other levels in the hierarchy in the order specified in Section~\ref{sec:klo}. The target block can possibly reside in any of the $d-1$ hash tables in a level, therefore, the client invokes a PIR protocol to extract the target block out of the many possible positions. Every PIR in the procedure is preceded by downloading the headers in the queried array, using which the client knows the position of the target block. A technical detailed description is provided in Algorithm~\ref{algo:3query}.
\begin{algorithm}[t!b]
	\caption{Three-Server Construction: Query}
	\label{algo:3query}
	\begin{algorithmic}[1]
		\State Allocate a local register of the size of a single record.
		\State Initialize a flag $\found\gets 0$.
		\parState{Download the top level, one record at a time. If $v$ is found at some entry $(v,x)$ then store $x$ in the local register, and mark $found\gets 1$.}
		\parState{Download all headers from $S$. If $v$ was found among these headers, let $p$ be its position, and mark $\found\gets 1$. Otherwise, let $p$ be a position of a random entry in the stash. Invoke $\Call{PIR}{S,p}$ to fetch $(v,x)$ with any two of the three servers, and store $x$ in the register.}
		\For{every level $i=1\dots L$}
		\State $t' \gets \floor{t/k}$
		\State $r\gets \floor{(t' \bmod d^i) / d^{i-1}}$
		\State $headers\gets \emptyset$
		\For{every hash table $j=r\dots 1$}
		\parState{If $\found=false$, compute the corresponding tag of $v$, $\tau\gets F_s(i,j,e^j_i,v)$. Otherwise, assign $\tau\gets F_s(i,j,e^j_i,dummy \circ t)$.}\label{step:computetagsquery}
		\parState{$Q^j_i\gets H_i.\lookup(\tau,\kappa^j_i)$}
		\parState{Download all headers of entries in $T^{j}_i[Q^j_i]$, and append them to $headers$. If one of the headers says $v$, mark $\found\gets true$.}
		\EndFor
		\parState{Let $p$ be the position of $v$ in $headers$ if it was found there, or a random value in $\{1,\dots,|headers|\}$ otherwise.}
		\parState{Let $A$ be the array of entries corresponding to headers in $headers$.}
		\parState{Invoke \Call{PIR}{$A,p$} to fetch $(v,x)$ with the two servers holding level $i$, and store $x$ in the register (if $v$ was not found in $headers$ this would be a dummy PIR).}
		\parState{Re-encrypt $headers$, and upload it back to the two servers, while changing $v$ to $dummyHash\circ t$.}\label{step:reuploadheaders}
		\EndFor
		\State If the query is a write query, overwrite $x$ in the register.
		\State Read each entry of the entire top level from both servers one at a time, re-encrypt it, then write it back, with the following exception: if the entry $(v,x)$ was first found at the top level, then overwrite $x$ with the (possibly) new value from the register, otherwise, write $(v,x)$ in the first empty spot of the form $(empty,\cdot)$.
		\State Increment the counter $t$, and reshuffle the appropriate levels.
	\end{algorithmic}
\end{algorithm}
\paragraph{Reshuffles.} Let $\calS_a$ and $\calS_b$ be the two servers holding level $i+1$, and let $\calS_c$ be the other server. Reshuffling levels $0,\dots,i$ into hash table $T_{i+1}^j$ is performed as follows. As a first step, we send all non-empty blocks that should be reshuffled (including stash) to $\calS_c$, by having the servers exchange the blocks they hold in levels $0,\dots, i$ and the stash, through the client, one block at a time, in a random order. Besides forwarding the blocks to $\calS_c$, the client also re-encrypts every block and re-tags it with a fresh tag (using epochs, as already mentioned). Once $\calS_c$ has all tagged blocks, he can create a new hash table and stash using the appropriate $\build$ procedure. He then sends the hash table and stash, one record at a time, to the client. The client re-encrypts all records, and forwards them to the other two servers, who store the hash table in $T^j_{i+1}$, and the stash to its place. The client uses dummy stash blocks to replace as many empty blocks as needed to get a full hash table, and a full stash. This is important since we do not want to reveal the load of the stash to the server that does the next reshuffle. The reshuffle procedure is described in full details in Algorithm~\ref{algo:3reshuffle}.

\begin{algorithm}[t!b]
	\caption{Three-Server Construction: Reshuffle}
	\label{algo:3reshuffle}
	\begin{algorithmic}[1]
		\item[\textbf{Reshuffling into table $T^j_{i+1}$}]
		\item[Let $\calS_a$ and $\calS_b$ be the servers holding level $i+1$, and let $\calS_c$ be the other server.]
		\parState{Every server of the three allocates a temporary array. For every level $\ell$ between levels $1$ and $i$, let $\calS^\ell$ be the server with the smallest id that holds level $\ell$. For every such $\ell$, $\calS^\ell$ inserts all records in level $\ell$ to its temporary array. In addition, one of the servers, say $\calS_0$, inserts all stash records into its temporary array.}
		\parState{$\calS_c$ applies a random permutation on its temporary array, and sends the records one by one to the client. The client re-encrypts each record and sends it to $\calS_b$. $\calS_b$ inserts all records it receives to its array.  $\calS_b$ permutes its array randomly, and forwards it to $\calS_a$ through the client (who re-encrypts them). $\calS_a$, in his turn, also inserts all received records, applies a random permutation, and sends them one by one to the client.}		
		
		\parState{The client re-encrypts every non-empty record $(v,x)$ and sends it to $\calS_c$, together with a tag, which is the output of the PRF $F_s(i + 1,j, e^j_{i+1}, v)$, where $e_{i+1}$ is the new epoch of $T^j_{i+1}$. Note that $v$ may be a virtual memory address, or a dummy value. In this step, dummy records are treated as real records and only empty records are discarded.}\label{step:computetagsreshuffle}
		\parState{$\calS_c$ receives $d^{i}(k +s)$ tagged records, which are all records that should be reshuffled into $T^j_{i+1}$. It generates a new key $\kappa^j_i\gets H_i.\gen(N)$, and constructs a hash table and a stash $(T^j_i,S)\gets H_i.\build(\kappa^j_i,Y)$, where $Y$ is the set of tagged records received from the client. If the insertion fails, a new key is generated (this happens with a negligible probability). $\calS_c$ then informs the client about the number of elements inside the stash, $\sigma$, and the key $\kappa^j_i$, then sends both the hash table $T^j_i$ and the stash one record at a time to the client.}
		\parState{As the client receives entries from $\calS_c$ one at a time, it re-encrypts each record and sends it to both $\calS_a$ and $\calS_b$ without modifying the contents except: \begin{enumerate}[label=(\alph*)]
				\item The first $\sigma$ empty records in the table the client receives from $\calS_c$ are encrypted as ($dummyStash\circ r$, $\cdot$), incrementing $r$ each time.
				\item Subsequent empty records from the table are encrypted as $(empty, \cdot)$.
				\item Every empty record in the stash is re-encrypted as $(dummyStash\circ r, \cdot)$, incrementing $r$ each time.
			\end{enumerate}
		\label{step:fullstash}
		\parState{$\calS_a$ and $\calS_b$ store the table records in level $i + 1$ in the order in which they were received, and store the stash
		records at the top level.}}
	\end{algorithmic}
\end{algorithm}

\subsection{Analysis}\label{sec:3analysis}
\subsubsection{Complexity.} We begin with analyzing the complexity of the described scheme.
\paragraph{Server storage.} The combined server storage contains a stash of size $s$, a top level of size $k$, and two duplicates of every other level $i$, consisting of $d-1$ hash tables of size $m_i$ each. In total, we have $O\left(s+k+\sum_{i=1}^L dm_i\right)$.
\paragraph{Client storage.} The client uses constant working memory as it only receives and forwards records, one record at a time. Notice that the client does not need to keep all the headers he downloads prior to executing PIR queries, and it would be sufficient to keep only the position of the target block among them. 

\paragraph{Overhead.} We now analyze the cost of performing a single query. First, consider the communication cost of downloading the headers for the PIRs. The PIRs are performed over the stash and each of the levels $i=1,\dots,L$. The number of headers downloaded amounts to $s+\sum_{i=1}^{L}(d-1)C_\lookup(H_i)\leq s+\alpha L(d-1)$, which is equivalent to $O(L)$ blocks of the required minimum size. Overall, $L+1$ PIR queries are invoked. For levels $i=1,\dots,L$, the PIR queries are performed over arrays of size at most $(d-1)C_\lookup(H_i)$. By using even the classic two-server PIR from\cite{PIR}, this costs $(d-1)C_\lookup(H_i)<\alpha d$ bits and a single block per level. The stash adds $s$ bits and a block. All of this sums up to no more than $O(L)$ data blocks. As for the actual blocks the client downloads, these are the blocks of the top level, $O(k)$ in number.

It remains to add the overhead caused by the reshuffles made between queries. Blocks are reshuffled down to some hash table in the $i^{th}$ level if $i$ is the smallest integer for which $(t/k)\bmod d^i\neq 0$. This occurs whenever $t/k$ is a multiple of $d^{i-1}$, but not of $d^i$, i.e., at most once every $k\cdot d^{i-1}$ queries. One can clearly see that during the reshuffle of a hash table $T^j_i$, the number of blocks transmitted is asymptotically bounded by the size of $T^j_i$ and the size of the stash, which is $m_i+s=O(m_i)$. Hence, the amortized overhead of the reshuffles is $O(\sum_{i=1}^L \frac{m_i}{d^{i-1}k})$.

\subsubsection{Security.} Next, we present the security proof for our construction. We prove that the access pattern to any of the servers in the scheme is oblivious and independent on the input. We describe a simulator $\simulator_a$ (for $a\in\{0,1,2\}$), that produces an output that is computationally indistinguishable from the view of server $\calS_a$ during the execution of the protocol, upon any sequence of virtual queries $v_1,\dots,v_\ell$, given only its length $\ell$.

\begin{lemma}[Security of the three-server ORAM]\label{lem:3sec}
	Let $\view_a(\vec{y})$ be the view of server $\calS_a$ during the execution of the three-server ORAM protocol, described in Algorithms~\ref{algo:3query} and~\ref{algo:3reshuffle}, over a virtual access pattern $\vec{y}=((v_1,x_1),\dots,(v_\ell,x_\ell))$. There exist simulators $\simulator_0,\simulator_1,\simulator_2$, such that for every $\vec{y}$ of length $\ell$, and every $a\in\{0,1,2\}$, the distributions $\simulator_a(\ell)$ and $\view_a(\vec{y})$ are computationally indistinguishable.
\end{lemma}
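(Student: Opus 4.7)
My plan is to construct three simulators $\simulator_0, \simulator_1, \simulator_2$ and argue indistinguishability via a hybrid argument that replaces, in order, the PRF outputs by truly random values, the encryptions by encryptions of zero, and the PIR queries by simulated PIR queries. For a fixed $a \in \{0,1,2\}$, the structural part of $\calS_a$'s view (which level is reshuffled at which round, which records are forwarded between servers during a reshuffle, in which order the top level is scanned, how many PIR headers are downloaded per level, which pair of servers answers each PIR, and so on) is a deterministic function of the round counter $t$ alone; hence $\simulator_a$ can reproduce it exactly from $\ell$. The content-dependent portions of the view are of three kinds: (i) ciphertexts flowing through $\calS_a$, (ii) pseudorandom tags, and (iii) indices of buckets whose headers are downloaded and positions targeted by PIR queries. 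The simulator fills (i) with encryptions of $0$, (ii) with independent uniform strings, and (iii) as described below.

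\paragraph{Hybrids and reductions to standard primitives.} I would introduce three hybrid experiments $\mathsf{Hyb}_0 = \view_a(\vec{y})$, $\mathsf{Hyb}_1$ (tags replaced by uniform fresh random strings, using a fresh string for each distinct $(i,j,e^j_i,\cdot)$-tuple), $\mathsf{Hyb}_2$ (all ciphertexts replaced by encryptions of $0^{|m|}$), and $\mathsf{Hyb}_3$ (each PIR invocation replaced by the output of the PIR simulator for the appropriate server $\calS_a$). The step $\mathsf{Hyb}_0 \approx \mathsf{Hyb}_1$ reduces to PRF security: since $F_s$'s key is held by the client, its outputs that appear in $\calS_a$'s view are a polynomial number of evaluations on distinct inputs and may be swapped for truly random ones. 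The step $\mathsf{Hyb}_1 \approx \mathsf{Hyb}_2$ is the standard semantic-security hybrid over the polynomially many ciphertexts re-encrypted by the client. The step $\mathsf{Hyb}_2 \approx \mathsf{Hyb}_3$ follows from the security of the underlying two-server PIR, applied separately to each PIR invocation in which $\calS_a$ participates (the other participating server is always one of the two servers holding the level, never colluding with $\calS_a$ in the model of Definition~\ref{def:m-oram}).

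\paragraph{Simulating the bucket indices: the core obliviousness argument.} What remains after the three hybrids is to show that $\mathsf{Hyb}_3$ can be produced without knowledge of $\vec{y}$. The only $\vec{y}$-dependent data left in $\calS_a$'s view are, for each level $i$ that $\calS_a$ holds, the sequence of bucket indices $Q^j_i = H_i.\lookup(\tau,\kappa^j_i)$ downloaded during queries. I will separate the argument into two cases: (a) for a level $i$ that $\calS_a$ does \emph{not} hold, $\calS_a$ is the ``reshuffler'' and sees tags during the $\build$ phase but never sees subsequent lookup indices, so its view for level $i$ consists only of a sequence of $d^{i-1}(k+s)$ records each tagged with an independent uniform string (by $\mathsf{Hyb}_1$) and thus is trivially simulatable; (b) for a level $i$ that $\calS_a$ \emph{does} hold, $\calS_a$ saw neither the tags $\tau$ nor the key $\kappa^j_i$ (which was generated by the other replicate during build), so the bucket indices are determined by independent uniform inputs through a function whose key is unknown to $\calS_a$. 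The crucial combinatorial fact I need is the ``queried-once'' invariant of Claim~\ref{claim:queriedonce}: between two successive reshuffles of a given hash table $T^j_i$, each virtual address $v$ triggers at most one \emph{real} lookup with tag $\tau = F_s(i,j,e^j_i,v)$; any repeat query to $v$ is intercepted in a higher level (by the top-level write-back and the \textsf{dummyHash} replacement on line~\ref{step:reuploadheaders}) and its tag in $T^j_i$ becomes a fresh $F_s(i,j,e^j_i,\text{dummy}\circ t)$. Consequently, after replacing the PRF by a random function, each observed bucket index is the image of a previously-unqueried uniform input, and can be simulated by sampling a uniformly random position in $\{1,\dots,m_i\}$ (conditioned on the load profile induced by $\build$, which itself was sampled from the same distribution by the reshuffler).

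\paragraph{Expected main obstacle.} The delicate point will be item (b): I must argue that the $d-1$ tables inside level $i$, the headers downloaded on top of the bucket, and the headers re-uploaded after changing the hit location to \textsf{dummyHash}, do not jointly leak anything about whether and where $v$ was found. This requires carefully tracking that (i) the number of headers scanned per level is fixed regardless of $\found$, (ii) the PIR position $p$ is uniform when $v$ is not in the headers and is the ``correct'' position otherwise but that correctness is unobservable because the headers themselves are simulated ciphertexts in $\mathsf{Hyb}_2$, and (iii) the tagging continues with a dummy input once $\found$ is set, so that from the reshuffler's future view the distribution of tags seen at the next build of $T^j_i$ is still uniform and fresh thanks to the epoch increment $e^j_i$. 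Once these three bookkeeping details are verified, assembling the simulator and concluding indistinguishability across the chain $\mathsf{Hyb}_0 \approx \cdots \approx \mathsf{Hyb}_3 \approx \simulator_a(\ell)$ is routine.
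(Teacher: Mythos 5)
Your proposal follows essentially the same route as the paper: the same hybrid chain (PRF $\to$ random, encryption $\to$ zero, PIR $\to$ simulated), the same invariants (the round counter determines the structural part of the view, the ``queried-once'' property of Claim~\ref{claim:queriedonce} gives freshness of tags, uniqueness of headers gives freshness of build-time tags, epoch increments decouple hash-table generations), and the same case split between levels $\calS_a$ reshuffles and levels $\calS_a$ holds. The one place where your argument drifts, and would need tightening, is the last step of case (b): you propose simulating each observed $Q^j_i$ by drawing a \emph{uniformly random} position in $\{1,\dots,m_i\}$, and you add a parenthetical about conditioning on a ``load profile.'' Neither is quite right. The sequence $\langle Q^j_i \rangle_a$ is not a uniform sequence of positions for a general $(n,m,s)$-hashing scheme; what the paper's Claims~\ref{claim:3tagsareuniform} and~\ref{claim:3keysareuniform} establish is that the \emph{inputs} $\langle \tau \rangle_a$ and $\langle \kappa^j_i \rangle_a$ are indistinguishable from fresh uniform tags and keys given $\calS_a$'s reshuffle view, so the correct simulator samples fresh uniform tags and keys and \emph{evaluates} $H_i.\lookup$ on them. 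This also makes the ``load profile'' conditioning unnecessary: $\lookup$'s output is a deterministic function of the tag and key alone, independent of table occupancy. Two smaller corrections: the key $\kappa^j_i$ used for level $i$ is generated by the third server $\calS_c$ (the reshuffler), not by ``the other replicate during build''; and your case (a) implicitly relies on Claims~\ref{claim:stash} and~\ref{claim:reshuff} (stash always full, number of records entering a reshuffle is counter-determined), which you fold into ``a deterministic function of $t$'' — that is fine, but you should note that the dummy-stash padding in Step~\ref{step:fullstash} is what makes this true and is therefore load-bearing, not merely cosmetic.
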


\paragraph{Proof Sketch.} As in all previous works, we assume that the client uses one-way functions to encrypt and authenticate the data held in the servers, and therefore, encrypted data is indistinguishable by content (notice that the client re-encrypts every piece of data before sending it). We replace the keyed tagging functions, that are modeled as PRFs in the scheme, with random functions. These preliminary steps can be formalized using proper standard hybrid arguments, which we avoid for brevity.

We begin by inspecting the view of the servers during the reshuffle procedure. The procedure starts with the servers exchanging all blocks stored in levels $1,\dots,i$ and in the stash, and sending them to $\calS_c$. It is essential for security that the number of these blocks is independent of the input, as we show below.

\begin{claim}\label{claim:stash}
	Throughout the ORAM simulation, the stash is always full (contains $s$ records).
\end{claim}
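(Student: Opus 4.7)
My approach is to prove the claim by induction on the sequence of operations, identifying precisely where the stash content can change and verifying that its occupancy is $s$ after every such change. Since the stash is physically allocated as a fixed buffer of $s$ slots, the claim reduces to showing that no slot is ever left marked as $(empty,\cdot)$ at the end of an operation.

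For the base case, the ORAM simulation is initialized with all $N$ data blocks in the bottom level $L$, and the $s$ slots of the shared stash are filled with $s$ dummy stash records. For the inductive step, I would first examine the query procedure (Algorithm~\ref{algo:3query}) and verify that the stash is touched only by downloading its headers and by a single PIR invocation, both of which are strictly read-only; no stash slot is overwritten, added, or erased during a query. Hence a query in isolation preserves the invariant.

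The interesting case is the reshuffle (Algorithm~\ref{algo:3reshuffle}). A reshuffle into $T^j_{i+1}$ drains the stash together with levels $1,\dots,i$ to $\calS_c$, who runs $H_i.\build$ on the tagged records and returns a fresh hash table together with a fresh stash of the allocated size $s$, containing the overflowed real (and possibly dummy-hash) blocks. The client then streams these records back to $\calS_a$ and $\calS_b$, and by step~(c) of the reshuffle procedure every still-empty slot of the returned stash is re-encrypted as $(dummyStash\circ r,\cdot)$, incrementing $r$ each time. Thus, at the end of the reshuffle, all $s$ slots carry a record (real, dummy-hash, or dummy-stash), completing the inductive step.

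The only subtlety, and what I take to be the main thing worth double-checking in what is otherwise a direct inspection of the two algorithms, is that retrieving a block from the stash via PIR during a query does not silently vacate its slot. Because the PIR is read-only and the retrieved block is merely additionally copied into the top level, the slot remains occupied by the same record. The resulting temporary duplication between the stash and the top level is harmless for correctness, since the top level is consulted first on subsequent lookups, and both copies are absorbed at the next reshuffle, which always folds the entire stash in.
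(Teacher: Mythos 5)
Your proof is correct and hinges on the same single observation the paper uses: Step (c) of the reshuffle procedure (Algorithm~\ref{algo:3reshuffle}) fills every empty stash slot with a numbered dummy-stash record. The extra structure you add — the induction, the check that queries are stash-read-only, and the remark about temporary duplication after a stash PIR hit — is sound and merely spells out what the paper leaves implicit.
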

\begin{proof}
	In the reshuffle procedure (Algorithm~\ref{algo:3reshuffle}, Step~\ref{step:fullstash}), every empty slot in the stash is filled with a dummy stash block, thus the claim holds.\qed
\end{proof}
\begin{restatable}{claim}{reshufflingclaim}
	\label{claim:reshuff}
	Let $t$ be a multiple of $k$, and denote $t'=t/k$. For every $1\leq i\leq L$, define $r^t_i:=\floor{(t'\bmod d^i)/d^{i-1}}$. Then,
	\begin{enumerate}[label=(\roman*)]
		\item\label{item:reshufftop} the top level is full prior to the reshuffle at round $t$, and is empty afterwards.
		\item for every other level $1\leq i\leq L$, once the reshuffle is completed, the first $r^t_i$ tables in level $i$ (i.e., $T^1_i,\dots,T^{r^t_i}_i$) are full (contain $d^i(k+s)$ records each), and all other tables in level $i$ are empty.
	\end{enumerate}
\end{restatable}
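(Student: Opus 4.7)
The plan is to prove the claim by induction on $t' = t/k$, using the base-$d$ counting analogy described just before the claim. The first thing I would make precise is the dictionary between the algorithm and the counter: $r^t_i = \lfloor (t' \bmod d^i)/d^{i-1}\rfloor$ is exactly the $(i{-}1)$-st digit of $t'$ in base $d$, and the reshuffle rule---at round $t = t'k$, merge levels $0, \dots, i^*$ into $T^j_{i^*+1}$, where $i^*$ is maximal with $d^{i^*} \mid t'$ and $j = (t' \bmod d^{i^*+1})/d^{i^*}$---implements incrementing $t'$ by $1$ in base $d$ with carry propagation. The base case is $t' = 0$, where $r^0_i = 0$ for every $i$; one can either view the initial loading of the lowest level as the outcome of a prior conceptual reshuffle, or simply start the induction from the first reshuffle.

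For the inductive step, assume the invariant holds at $t = t'k$ and let $i^*$ be maximal with $d^{i^*} \mid (t'+1)$. I would first verify that every table in levels $1, \dots, i^*$ is full just before the reshuffle at round $t+k$: since $d^{i^*} \mid (t'+1)$, we have $t' \equiv -1 \pmod{d^j}$ for each $j \le i^*$, hence $t' \bmod d^j = d^j - 1$ and $r^t_j = d-1$, so the induction hypothesis says all $d-1$ tables at each such level are full. These, together with the top level, are precisely what gets swept into the reshuffle, leaving levels $1, \dots, i^*$ empty---matching $r^{t+k}_j = 0$ for $j \le i^*$. Next, the reshuffle fills $T^j_{i^*+1}$; a short calculation shows that before the reshuffle there were $j-1 = r^t_{i^*+1}$ full tables in level $i^*+1$, so afterwards there are $j = r^{t+k}_{i^*+1}$, as required. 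For levels $i \ge i^* + 2$, the reshuffle does not touch them, and $r^{t+k}_i = r^t_i$ because adding $1$ to $t'$ propagates no carry that high (a one-line mod-$d^i$ check); the invariant is preserved unchanged.

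Part \ref{item:reshufftop} I would dispatch separately: each query deposits one block in the top level, consecutive reshuffles are $k$ queries apart, and since $i^* \ge 0$ at every reshuffle the top level is always emptied, so it accumulates exactly $k$ blocks and is full just before the next reshuffle. The record count per full table then follows by summing the capacities of levels $0, \dots, i-1$ and the stash (using Claim~\ref{claim:stash} that the stash is always full), giving $k + s + \sum_{j=1}^{i-1}(d-1)d^{j-1}(k+s) = d^{i-1}(k+s)$, which matches the input size of $H_i$. The main obstacle I anticipate is the modular-arithmetic bookkeeping in the inductive step---certifying that the choice of $i^*$ and $j$ in the reshuffle rule truly coincides with base-$d$ incrementation of $t'$ and that no carry escapes above level $i^*+1$---but once the digit-per-level dictionary is stated cleanly, the rest reduces to a short case split.
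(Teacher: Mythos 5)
Your proof is correct and follows essentially the same inductive strategy as the paper's own proof in Appendix~\ref{app:claims}: induction on $t'$, identifying $r^t_i$ with the $(i{-}1)$-st base-$d$ digit of $t'$, splitting the inductive step into the top level, the levels swept by the cascade (whose tables are all full when $d^j \mid (t'+1)$), the destination level (where the filled-table count increments by exactly one), and the untouched levels above. One minor remark: your record count $k+s+\sum_{j=1}^{i-1}(d-1)d^{j-1}(k+s)=d^{i-1}(k+s)$ agrees with the paper's computation and with the hashing-scheme parameter $d^{i-1}(k+s)$ of Theorem~\ref{thrm:three}, whereas the claim as typeset says $d^i(k+s)$; that is an indexing typo in the claim statement, and your figure is the right one.
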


Recall the analogy of the reshuffles to counting in base $d$ (see Section~\ref{sec:klo}). Notice that $r^t_i$ can be also defined as the $i^{th}$ digit in the base $d$ representation of $t'$. Claim~\ref{claim:reshuff} follows from these two observations. Due to space limitation, a complete proof for this claim is provided in Appendix~\ref{app:claims}.

Having shown that the amount of data exchanged during the first steps of the reshuffling procedure depends only on $t$, we can simulate the view of any of the servers by a sequence of arbitrary encrypted data of the appropriate length. Next, $\calS_c$ receives $(k+s)\cdot d^i$ tagged encrypted records (Claim~\ref{claim:reshuff}). In Claim~\ref{claim:3uniquetags} below, we show that these records have unique addresses, therefore their tags will also be unique (with overwhelming probability). Furthermore, these tags are computed using a random function that has not been used before (fresh epoch). Hence, the view of $\calS_c$ can be simulated as a sequence of $(k+s)\cdot d^i$ arbitrary encrypted records with random distinct tags. Once $\calS_c$ successfully creates the hash table, it sends it to $\calS_a$ and $\calS_b$ via the client. The size of the hash table is fixed. Again, since the entries of the hash tables are encrypted, they are indistinguishable from any arbitrary sequence of encrypted records, and can be simulated as such.

\begin{claim} \label{claim:3uniquetags}
	At all times during the execution, any non-empty record of the form $(v,\cdot)$ will appear at most once in all hash tables in the hierarchy.
\end{claim}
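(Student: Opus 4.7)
My plan is to prove the claim by induction on the number of operations (queries and reshuffles) executed so far. The invariant I would maintain is that at every point during the simulation, each non-empty address $v$ (be it a real virtual address in $[N]$ or a dummy label of the form $dummyHash\circ t$ or $dummyStash\circ r$) labels at most one record in the entire server storage (top level, stash, and all hash tables $T^j_i$). Immediately after the initialization, the lowest level holds the $N$ data blocks under their distinct virtual addresses and every other slot is empty, which settles the base case.

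For the inductive step covering a query to virtual address $v$, I would trace Algorithm~\ref{algo:3query} using the inductive hypothesis that $v$ occupies at most one location. If $v$ is located in the top level during the initial scan, its value is updated in place at the final write-back step; for the stash and every lower level the $\found$ flag is already set, so the lookups are performed on dummy tags and no header is rewritten anywhere. If instead $v$ is located in the stash or in some lower level $i$, Step~\ref{step:reuploadheaders} overwrites the matching header with $dummyHash\circ t$; this label is globally fresh because the query counter $t$ is strictly increasing. The concluding write-back inserts $v$ into the first empty slot of the top level. In both cases $v$ ends up in exactly one record after the query, and no other address acquires a duplicate.

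For the inductive step covering a reshuffle of levels $0,\dots,i$ into $T^j_{i+1}$, I would argue that the reshuffle gathers all records from those levels and from the stash, re-tags them, and inserts them into $T^j_{i+1}$ via $H_{i+1}.\build$, after which the old levels are empty and Step~\ref{step:fullstash} repopulates the fresh stash with fresh $dummyStash\circ r$ records (with $r$ restarting at this reshuffle). By the inductive hypothesis the gathered addresses are pairwise distinct, so the new hash table contains each address at most once; the new dummy-stash labels are pairwise distinct by construction, and they are the only live $dummyStash$ labels anywhere in storage because the previous stash has just been consumed.

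The main subtlety will be the bookkeeping around the two dummy families. The $dummyHash\circ t$ labels are globally fresh because $t$ grows monotonically, but the $dummyStash\circ r$ counter restarts inside each reshuffle and could, a priori, clash with dummies produced in an earlier reshuffle. This is resolved by the observation that the stash is fully rebuilt at every reshuffle, so at any moment only one generation of dummy-stash records coexists in the data structure. A secondary but important point is that the stash case of a query requires the same header-to-dummy rewrite that Algorithm~\ref{algo:3query} spells out only inside the lower-levels loop; this rewrite is implicit in the stash handling, and without it both the invariant and the subsequent security argument would fail.
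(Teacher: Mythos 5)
Your induction framing is more explicit than the paper's terse argument, but it contains a genuine error in the handling of the $dummyStash\circ r$ family. The paper's intent (stated explicitly in its own proof of this claim: ``Dummy blocks, of any type, are labeled with a counter that is incremented after the creation of each block, thus cannot reoccur'') is that $r$, like $t$, is a persistent global counter that is never reset; that is what makes the paper's one-line argument work. You instead assume $r$ restarts at the beginning of each reshuffle, and then argue this is safe because ``the stash is fully rebuilt at every reshuffle, so at any moment only one generation of dummy-stash records coexists in the data structure.'' That rescue fails: Step~\ref{step:fullstash}(a) of Algorithm~\ref{algo:3reshuffle} also injects $\sigma$ records labeled $dummyStash\circ r$ directly into the hash table $T^j_{i+1}$, and those records remain in $T^j_{i+1}$ until level $i+1$ itself is later reshuffled. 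In the meantime, further reshuffles of higher levels would, under your restarting convention, re-issue $dummyStash\circ 1, dummyStash\circ 2,\dots$ into the stash and into sibling tables of level $i+1$, producing exactly the duplicates the claim forbids. So ``one generation at a time'' is simply false for dummy-stash records living inside hash tables, and the induction does not close.

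The fix is to adopt the paper's convention that $r$ is monotone across the entire execution (as $t$ is). Once you do, your inductive argument for the real-address case and the $dummyHash\circ t$ case goes through as you wrote it, and the $dummyStash$ case becomes trivial for the same reason as $dummyHash$: the counter never repeats, so the label is globally fresh. One genuinely useful observation you make is that Algorithm~\ref{algo:3query} spells out the header-to-$dummyHash$ rewrite only inside the per-level loop (Step~\ref{step:reuploadheaders}), and an analogous rewrite for the stash must be understood as implicit; without it, a queried block found in the stash would persist there with its real address after being re-inserted into the top level, breaking this very invariant. That is a real gap in the paper's pseudocode worth flagging, though not one that affects the correctness of the intended scheme.
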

\begin{proof}
	A label $v$ can be either a virtual address or a dummy label of some type. A real virtual address can be added to the top level of the hierarchy once it is queried, in such a case, it is removed from its prior location and replaced by a dummy hash block. Dummy blocks, of any type, are labeled with a counter that is incremented after the creation of each block, thus cannot reoccur.\qed
\end{proof}

\begin{cor} \label{cor:3uniquetagsreshuff} The tagging function $F_s(\cdot)$ will not be computed twice on the same input throughout the executions of Algorithm~\ref{algo:3reshuffle} during the ORAM simulation.
\end{cor}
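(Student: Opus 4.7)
The plan is to perform a straightforward case analysis over pairs of invocations of $F_s$ that appear in Step~\ref{step:computetagsreshuffle} of Algorithm~\ref{algo:3reshuffle}. Each such invocation is on an input of the form $(i+1, j, e^j_{i+1}, v)$, where $(i+1, j)$ identifies the target hash table and $e^j_{i+1}$ is its current epoch, so a collision on the input requires all four coordinates to agree simultaneously.

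First, I would handle the case where both evaluations occur inside a single execution of Algorithm~\ref{algo:3reshuffle}, i.e. while building the same table $T^j_{i+1}$. Here the first three coordinates are fixed, so it suffices to argue that no two tagged records carry the same label $v$. By Step~\ref{step:computetagsreshuffle}, the records being tagged are exactly the non-empty entries gathered from levels $0, \dots, i$ and the stash (namely, real blocks, dummy-hash blocks, and dummy-stash blocks; empty slots are explicitly discarded before tagging). Claim~\ref{claim:3uniquetags} then guarantees that every non-empty label $v$ appears at most once throughout the hierarchy at any given time, so these labels are pairwise distinct and the $F_s$ inputs disagree on their $v$-coordinate.

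Next, I would handle the case of two evaluations from different executions of Algorithm~\ref{algo:3reshuffle}. If the two reshuffles target different tables, then the pair $(i+1, j)$ already differs between the two inputs. Otherwise both target the same $T^j_{i+1}$ in two distinct rounds; as stated in the Tags paragraph, the client refreshes the epoch $e^j_{i+1}$ prior to each reshuffle into this table, so the third coordinate differs between the two invocations.

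There is no genuine difficulty here: the corollary is essentially a bookkeeping consequence of Claim~\ref{claim:3uniquetags} together with the fresh-epoch convention. The one subtlety worth pinning down explicitly is that the dummy-label namespaces used in the scheme (labels of the form $dummyHash\circ t$ and $dummyStash\circ r$) are disjoint from $[N]$ and from each other, and that their counters strictly increase over the course of the simulation, so that Claim~\ref{claim:3uniquetags} indeed covers dummy records exactly as it does real ones.
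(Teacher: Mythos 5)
Your proof is correct and follows the same route as the paper: distinctness of labels within a single reshuffle pass is exactly Claim~\ref{claim:3uniquetags}, and distinctness across separate reshuffles is handled by the $(i+1,j)$ indices and the per-table epoch refresh. The paper's proof is a one-liner that explicitly addresses only the within-pass case and leaves the epoch argument implicit; your more careful case split and your note about disjoint dummy namespaces just spell out what the paper takes for granted.
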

\begin{proof}
	 Otherwise, the same $v$ would have appeared twice during the pass in Step~\ref{step:computetagsreshuffle} of Algorithm~\ref{algo:3reshuffle}, in contradiction to Claim~\ref{claim:3uniquetags}.\qed
\end{proof}

To summarize, to simulate the view of the servers during the reshuffling phase, $\simulator_a(\ell)$ and $\simulator_b(\ell)$ output a sequence of encrypted arbitrary records of the appropriate length (which is fixed due to Claims~\ref{claim:stash} and~\ref{claim:reshuff}), whereas $\simulator_c(\ell)$ outputs a sequence of encrypted arbitrary records that are tagged using distinct uniform values ($a,b,c$ alternate between 0,1,2 throughout the phases). From Corollary~\ref{cor:3uniquetagsreshuff} and the security of the underlying symmetric encryption and PRFs, these outputs are indistinguishable from the views of the servers at the reshuffles.

We proceed to simulating the access pattern during queries. A query for a block $v$ begins, independently of $v$, with downloading all blocks in the top level, and all headers in the stash. Next, a PIR is invoked over the stash. From the assumed security of the underlying PIR scheme, there exist two simulators $\simulator^{\text{PIR}}_0(m),\simulator^{\text{PIR}}_1(m)$, that simulate the individual views of the two servers (resp.) involved in the protocol, given only the size of the queried array, $m$. We use these simulators to simulate the view of the servers involved in the PIR over the stash, and in the other PIR invocations to follow.

It remains to show that the identity of the blocks over which the PIRs are called, i.e. the values $Q^j_i$ that a server $\calS_a$ sees during the execution of Algorithm~\ref{algo:3query}, can be simulated as well. Recall that, at every execution of the algorithm, $Q^j_i$ is computed, for every $i,j$, as $H_i.\lookup(\tau,\kappa^j_i)$, where $\tau$ is a tag computed using $F_s$, and $\kappa^j_i$ is the used hash key. We denote by $\langle Q^j_i\rangle_a$ the sequence of $Q^j_i$ values seen by $\calS_a$ at \emph{all} executions of Algorithm~\ref{algo:3query} during the ORAM simulation (these values correspond to levels $i$ that are stored in $\calS_a$). We also denote by $\langle\tau\rangle_a$ and $\langle\kappa^j_i\rangle_a$ the values used to compute $\langle Q^j_i\rangle_a$.

\begin{claim}\label{claim:queriedonce}
	The same $v$ will not be queried upon twice at the same hash table (in Algorithm~\ref{algo:3query}) between two reshuffles of the table during the ORAM execution.
\end{claim}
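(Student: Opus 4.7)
The plan is to argue by contradiction: suppose $v$ takes the real branch of Step~\ref{step:computetagsquery} at table $T^j_i$ at two times $t_1<t_2$ that lie between two consecutive rebuilds of $T^j_i$, so the epoch $e^j_i$ (and hence the tag $F_s(i,j,e^j_i,v)$) is the same at both times. I will show that $v$ has no consistent location at $t_2$.

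First, I would characterise when the real branch is taken at $T^j_i$ during a query: precisely when $\found$ is still false upon reaching $T^j_i$, i.e., $v$ is \emph{not} present in the top level, in the stash, in any higher level $1,\dots,i-1$, nor in the higher-indexed tables $T^{r}_i,\dots,T^{j+1}_i$ of level $i$. Combined with the invariant that $v$ is always stored somewhere (correctness), this forces $v$'s location, at the moment of such a query, to be either $T^j_i$ itself, some $T^{j'}_i$ with $j'<j$, or a level strictly deeper than $i$.

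Second, I would observe that immediately after the query at $t_1$, the last step of Algorithm~\ref{algo:3query} places $v$ into the top level, and Step~\ref{step:reuploadheaders} (applied at every level at which $v$ was actually found) overwrites its former header entry with $dummyHash\circ t_1$. Hence at time $t_1^+$ the block $v$ resides in the top level and no hash table anywhere in the hierarchy carries a header equal to $v$.

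Third, I would trace the possible movements of $v$ during $(t_1,t_2)$ using the base-$d$ counting schedule of Section~\ref{sec:klo}. Within a single active period of $T^j_i$, the only reshuffles into level $i$ target the tables $T^{j+1}_i,\dots,T^{d-1}_i$, since by the counting analogy tables $T^1_i,\dots,T^j_i$ were filled earlier in the current cycle; a reshuffle into $T^j_i$ itself or a smaller-indexed table within this cycle would \emph{be} a rebuild of $T^j_i$, contradicting the assumption. Likewise, the only way $v$ can reach a level strictly deeper than $i$ is through a reshuffle of levels $0,\dots,i$ down into level $i+1$, but such a reshuffle empties $T^j_i$ and terminates its active period - contradicting $t_2$ still being in the active period. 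Therefore, at time $t_2^-$ the block $v$ must lie in the top level, the stash, a level $<i$, or a table $T^{j'}_i$ with $j'>j$, all of which contradict the characterisation from the first step.

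The main obstacle is this monotone-downward-flow argument; everything else follows from invariants essentially built into the scheme description (no duplicates, overwriting of headers on access, correctness of retrieval). The subtle point to be careful about is that $v$, once pushed to the top at $t_1$, cannot re-enter $T^j_i$ nor any $T^{j'}_i$ with $j'\leq j$ during the current active period without either triggering a rebuild of $T^j_i$ or cascading a reshuffle that empties all of level $i$. Once this flow property is established rigorously via the base-$d$ counting schedule, the contradiction is immediate.
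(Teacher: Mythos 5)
Your proof is correct and takes essentially the same route as the paper: after a non-dummy query for $v$ at $T^j_i$, the block is promoted to the top level and its old header overwritten, and the base-$d$ reshuffle schedule guarantees that, until $T^j_i$ is next rebuilt, $v$ can only re-enter level $i$ at a higher-indexed table $T^{j'}_i$ ($j'>j$) or remain above level $i$; the reverse-order scan then ensures $\found$ is already set before $T^j_i$ is reached. The only thing the paper's proof includes that you omit is the one-line observation that a dummy address $dummyHash\circ t$ is trivially queried only once, since $t$ is the monotone round counter — worth adding for completeness, but it does not affect the substance of the argument.
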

\begin{proof}
	For a dummy address $v=$'$dummy\circ t$', the lemma is trivial, since it can be queried only at access $t$. If a virtual address $v$ is queried at hash table $T_i^j$, it will be written to the top level. If $v$ reaches level $i$ before $T_i^j$ is shuffled down to level $i+1$, then it would be placed in a hash table $T_i^{j'}$ for $j'>j$. Therefore, subsequent queries to $v$ will find $v$ either at some level above level $i$, or at hash table $T_i^{j'}$. In both cases, the client will perform a dummy query at hash table $T_i^j$ (hence the importance of the reverse order scan over the hash tables).\qed
\end{proof}

\begin{cor}\label{cor:3uniquetagsqueries} The tagging function $F_s$ will not be computed twice on the same input throughout the executions of Algorithm~\ref{algo:3query} during the ORAM simulation.
\end{cor}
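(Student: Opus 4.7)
The plan is to inspect the only place where $F_s$ is invoked inside Algorithm~\ref{algo:3query}, namely Step~\ref{step:computetagsquery}, and verify that the input tuple $(i,j,e^j_i,w)$ supplied there is never repeated, where $w \in \{v,\ \mathit{dummy}\circ t\}$. Within a single execution of the query algorithm, the pairs $(i,j)$ enumerated by the nested loops are pairwise distinct, so no collision can occur inside one access. Hence, the task reduces to showing that no collision occurs across distinct accesses.

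First I would split on the form of $w$. If $w = \mathit{dummy}\circ t$, then the counter $t$ is incremented once per access, so the last coordinate of the tuple already distinguishes invocations coming from different accesses; thus no two such calls can agree. If $w = v$ is a genuine virtual address, fix the table index $(i,j)$ and consider the maximal time interval between two consecutive reshuffles of $T^j_i$ (a ``lifetime'' of the table). During this interval, the epoch $e^j_i$ is held fixed, and Claim~\ref{claim:queriedonce} guarantees that the same virtual address $v$ is not queried at $T^j_i$ twice; hence within one lifetime the tuple $(i,j,e^j_i,v)$ is seen at most once. Across different lifetimes of $T^j_i$, the epoch $e^j_i$ is refreshed prior to the intervening reshuffle, so the tuples differ in the third coordinate.

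Finally, I would note that dummy and real labels are drawn from disjoint namespaces (dummy labels carry the prefix $\mathit{dummy}$), so no tuple using a dummy label can collide with one using a real virtual address, completing the case analysis.

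The main obstacle, which is where Claim~\ref{claim:queriedonce} does the real work, is establishing the no-repeat property for real virtual addresses within a single lifetime of a table; the rest is bookkeeping on the counter $t$ and on the epoch updates. Since Claim~\ref{claim:queriedonce} has already been proven, the corollary follows in a few lines of case analysis.
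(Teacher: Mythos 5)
Your proof is correct and takes essentially the same route as the paper: both reduce to the case where the triple $(i,j,e^j_i)$ coincides and then conclude $v\neq v'$ from Claim~\ref{claim:queriedonce}. The extra bookkeeping you spell out (within-access distinctness of $(i,j)$, the $\mathit{dummy}\circ t$ namespace, and the epoch refresh across lifetimes) is all sound but is implicitly subsumed in the paper's one-line argument, since Claim~\ref{claim:queriedonce} already handles dummy addresses and the epoch is part of the compared tuple.
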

\begin{proof}
	Consider two different inputs $(i,j,e^j_i,v), (i',j',{e^j_i}',v')$. If $i=i',j=j'$ and $e^j_i={e^j_i}'$, then, due to Lemma~\ref{claim:queriedonce}, $v\neq v'$.\qed
\end{proof}

\begin{claim}\label{claim:3tagsareuniform}
	The sequence $\langle \tau \rangle_a$, defined above, is computationally indistinguishable from a uniform sequence of unique tags, given the view of $\calS_a$ during the reshuffles.
\end{claim}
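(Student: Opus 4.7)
The plan is a two-step argument: first, replace the keyed PRF $F_s$ by a truly random function, then prove the resulting statement information-theoretically. The first step is a standard PRF hybrid. The key $s$ is known only to the client, and $F_s$ is evaluated only through the client's invocations in Algorithms~\ref{algo:3query} and~\ref{algo:3reshuffle}, so replacing $F_s$ by a lazily-sampled random function $R$ introduces at most negligible distinguishing advantage against any polynomial-time adversary. It therefore suffices to establish the claim in the idealized world in which all tags are outputs of $R$.

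The main technical step is to show that, in this idealized world, the set of inputs to $R$ whose outputs constitute $\langle\tau\rangle_a$ is disjoint from the set of inputs to $R$ whose outputs appear in $\calS_a$'s reshuffle view, and that both sets contain only distinct inputs. Distinctness within each set is immediate: within reshuffles it is Corollary~\ref{cor:3uniquetagsreshuff}, and within queries it is Corollary~\ref{cor:3uniquetagsqueries}. For the cross-phase disjointness I would lean on the three-server architecture invariant that each level $i$ is held by exactly two servers and reshuffled by the third, so $\calS_a$ sees reshuffle tags only for destination levels $i+1$ that $\calS_a$ does not hold (when it plays the reshuffler role $\calS_c$), whereas $\langle\tau\rangle_a$ consists of tags used in queries to levels that $\calS_a$ does hold. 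Since the first coordinate of every tag input is the level index, and this coordinate partitions the two sources into complementary subsets of $\{1,\dots,L\}$, the two input sets are disjoint.

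Given input disjointness, the outputs of $R$ on the query-side inputs are independent uniform strings, statistically independent of the entire reshuffle view of $\calS_a$. Conditioning on that view therefore does not affect the distribution of $\langle\tau\rangle_a$, which becomes an iid uniform sequence over a range of size exponential in the security parameter. A direct birthday bound then shows that a polynomial-length iid uniform sequence is statistically close to a uniform sequence of distinct values, closing the argument.

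The delicate point will be pinning down the cross-phase disjointness. In particular, when $\calS_a$ reshuffles a table $T^j_{i+1}$, the fresh epoch $e^j_{i+1}$ it samples becomes the current epoch later used when querying $T^j_{i+1}$, so one might initially fear a collision between those reshuffle tags and subsequent query tags on the same table. The invariant that rules this out is precisely that those later queries are served by the two holders of level $i+1$, which by design are the other two servers, so the corresponding tags never enter $\langle\tau\rangle_a$. Making this observation fully precise, while also accounting for the stash tags and the epoch-refresh discipline across repeated reshuffles, is the main bookkeeping obstacle.
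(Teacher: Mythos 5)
Your proposal is correct and takes essentially the same route as the paper: the PRF-to-random-function hybrid is discharged as a preliminary step, and the core of the paper's argument is precisely your cross-phase disjointness via the level index in the PRF input---$\langle\tau\rangle_a$ consists of tags for levels $\calS_a$ holds, while any reshuffle tags $\calS_a$ sees (as the reshuffler $\calS_c$) are for levels it does not hold, so the input sets to the tagging function never intersect. Your explicit handling of the epoch-refresh subtlety and the closing birthday-bound step are spelled out more carefully than in the paper's terse argument, but the underlying decomposition is identical.
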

\begin{proof}
	The claim follows from Corollary~\ref{cor:3uniquetagsqueries}, and from the fact that the tags $\langle \tau \rangle_a$ correspond to levels that are not hashed by $\calS_a$. The view of $\calS_a$ during the reshuffles of such levels consist of encrypted data only, and is computationally independent of $\langle \tau \rangle_a$ (from the security of the underlying encryption scheme). The view of $\calS_a$ during the reshuffles of any other level are also independent of $\langle \tau \rangle_a$ since they consist of encrypted data, and random tags that are computed for different values of $i$.\qed
\end{proof}

\begin{claim}\label{claim:3keysareuniform}
	The sequence $\langle \kappa^j_i \rangle_a$, defined above, is computationally indistinguishable from a uniform sequence of hash keys, given the view of $\calS_a$ during the reshuffles.
\end{claim}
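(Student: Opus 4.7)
The approach closely parallels the proof of Claim~\ref{claim:3tagsareuniform}. The goal is to show that every key in the sequence $\langle \kappa^j_i \rangle_a$ is generated fresh, by a party other than $\calS_a$, in a step of the reshuffle that leaks nothing about the key to $\calS_a$. A standard hybrid argument then promotes this per-key statement to an indistinguishability statement for the joint distribution.

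First, I would identify which reshuffles are responsible for each key in the sequence. By the data-structure invariant, a key $\kappa^j_i$ can enter $\langle \kappa^j_i \rangle_a$ only if the corresponding level $i$ is held by $\calS_a$, since $\calS_a$ evaluates $H_i.\lookup(\tau,\kappa^j_i)$ only for tables it actually stores. By the distributed storage assignment, each such level is held by the pair $(\calS_a,\calS_b)$ for some $b\neq a$, and is reshuffled by the third server $\calS_c$, with $\{a,b,c\}=\{0,1,2\}$. In Step generating $\kappa^j_i\gets H_i.\gen(N)$ of Algorithm~\ref{algo:3reshuffle}, the key is sampled by $\calS_c$ using fresh local randomness; it is then communicated only to the client (together with $\sigma$), and never directly to $\calS_a$ during the reshuffle.

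Second, I would show that the rest of $\calS_a$'s view during this reshuffle is independent of $\kappa^j_i$. The records that $\calS_a$ contributes from its share of levels $0,\ldots,i$ are fixed before $\kappa^j_i$ is drawn, and are therefore trivially independent of it. The records that $\calS_a$ receives from the client after the hash table is built are re-encrypted by the client using fresh randomness; by the security of the symmetric encryption they are computationally indistinguishable from encryptions of any canonical plaintext, and in particular leak nothing about $\kappa^j_i$. The positions in which these ciphertexts are stored inside $T^j_i$ are determined by $\kappa^j_i$, but $\calS_a$ has no handle on which logical record occupies which slot, so the positions carry no exploitable information about the key. Reshuffles of levels \emph{not} held by $\calS_a$ either do not involve $\calS_a$ at all, or involve $\calS_a$ in the role of $\calS_c$, in which case $\calS_a$ does see a key, but that key is \emph{not} an element of $\langle \kappa^j_i\rangle_a$.

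Third, I would run a standard hybrid argument over the (polynomially many) reshuffles in the simulation: in the $t$-th hybrid, replace all ciphertexts $\calS_a$ receives in the first $t$ reshuffles with encryptions of a fixed canonical plaintext. Each adjacent hybrid is computationally indistinguishable by the IND-CPA security of the re-encryption. In the final hybrid, $\calS_a$'s reshuffle view is information-theoretically independent of every $\kappa^j_i$, so $\langle \kappa^j_i\rangle_a$ is distributed exactly as an i.i.d.\ sequence drawn from $H_i.\gen(N)$ for the appropriate sequence of levels $i$, which is by definition a uniform sequence of hash keys.

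The main obstacle is ensuring independence across \emph{repeated} reshuffles of the same table $T^j_i$ and across \emph{different} levels whose reshuffles share the server $\calS_a$ as a passive recipient. This is handled by the fact that each call to $H_i.\gen(N)$ inside $\calS_c$ uses fresh randomness uncorrelated with prior executions, combined with the observation above that $\calS_a$'s only channel for observing any $\kappa^j_i$ in its sequence is through ciphertexts that have been re-encrypted by the client; a single hybrid over all ciphertexts received by $\calS_a$ decouples the keys from this view in one stroke.
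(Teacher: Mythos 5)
Your proposal follows essentially the same route as the paper's proof: the keys in $\langle\kappa^j_i\rangle_a$ are drawn by a server other than $\calS_a$ (namely the reshuffling server $\calS_c$), they are never transmitted to $\calS_a$ directly, and the only channel through which $\calS_a$ could glean anything about them is the re-encrypted hash table it receives, which the encryption-hybrid renders useless. Your treatment of the case split (levels held by $\calS_a$ versus levels where $\calS_a$ itself plays $\calS_c$) is correct and, if anything, makes the argument more explicit than the paper's one-paragraph version.

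There is, however, one point you skip that the paper handles explicitly and that your phrasing gets subtly wrong. In Algorithm~\ref{algo:3reshuffle}, the key $\kappa^j_i$ is \emph{not} a single draw from $H_i.\gen(N)$: if the build-up fails, $\calS_c$ redraws the key until it succeeds. So the distribution of each key is $H_i.\gen(N)$ conditioned on successful build-up with the particular set of $d^i(k+s)$ tags, and your statement that in the final hybrid $\langle\kappa^j_i\rangle_a$ is ``distributed \emph{exactly} as an i.i.d.\ sequence drawn from $H_i.\gen(N)$'' is inaccurate. The conditioning event depends on the tag set, so a priori this could couple the keys to the data. The fix is the one the paper uses: Definition~\ref{def:hash} guarantees that build-up fails with probability $N^{-\omega(1)}$ over the choice of key (uniformly over any input set of $n$ tags), so the rejection-sampled distribution is within negligible statistical distance of the unconditional $\gen$ output, and the ``exactly'' should be replaced by ``up to negligible statistical distance''. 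With that correction your argument goes through and matches the paper's.
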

\begin{proof}
	The values in $\langle \kappa^j_i \rangle_a$ are uniformly chosen by a server other than $\calS_a$ under the constraint that keys do not cause a failure in the build up of the corresponding hash tables. From the definition of a hashing scheme (Definition~\ref{def:hash}), a failure occurs with a negligible probability over the choice of the key. The distribution of $\langle \kappa^j_i \rangle_a$ is clearly computationally independent of the view of $\calS_a$ during the reshuffles, which consists of encrypted data, and random tags at levels in which the hash keys $\langle \kappa^j_i \rangle_a$ are not used.\qed
\end{proof}

To conclude, in Claims~\ref{claim:3tagsareuniform} and~\ref{claim:3keysareuniform}, we show that $\langle\tau\rangle_a$ and $\langle\kappa^j_i\rangle_a$ (resp.) are indistinguishable from sequences of uniformly chosen values, given the view obtained at the reshuffles. Therefore, to simulate the values $\langle Q^j_i\rangle_a$, the simulator $\simulator_a(\ell)$ computes the output of $H_i.\lookup$ for uniformly random sequence of tags and hash keys. This completes the proof of Lemma~\ref{lem:3sec}, and hence Theorem~\ref{thrm:three}.

\section{A Family of Multi-Server ORAM Schemes}\label{sec:2multiservers}

In this section, we show how to use oblivious hashing, rather than regular hashing, to obtain our sub-logarithmic $(m,m-1)$-ORAM, for any $m>2$.

\begin{theorem}[Multi-server ORAM using oblivious hashing]\label{thrm:multi}
	Let $d$ be a parameter, and define $L=\log_d N$. Assume the existence of one-way functions, and a collection $\{H_i\}^L_{i=1}$, where $H_i$ is an oblivious $(d^{i-1}(k+s),m_i,s)$-hashing scheme. Then, for any $m\geq 2$, there exists an ($m$,$m-1$)-ORAM scheme that achieves the following overhead for block size $B=\Omega(\beta\log N+\alpha d\log N)$ \[O\left(k+L+ \sum_{i=1}^L \frac{m_i}{d^{i-1}k}\right)\] where $\alpha:=\max_i C_\lookup(H_i)$ and $\beta:=\max_i \frac{C_\build(H_i)}{d^{i-1}k}$.
\end{theorem}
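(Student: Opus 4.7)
The plan is to generalize the three-server scheme of Theorem~\ref{thrm:three} to an $(m,m-1)$-ORAM for any $m\geq 2$, keeping the balanced hierarchical skeleton of Section~\ref{sec:klo} intact and substituting two ingredients that relied on a non-colluding server. First, replicate every level identically across all $m$ servers and replace the two-server PIR with an $(m,m-1)$-PIR. Second, replace the Lu--Ostrovsky regular-hashing reshuffle (which required one server to be oblivious to the access pattern) with a reshuffle that crucially uses the \emph{obliviousness} of the hashing scheme $H_i$. The query protocol is then identical to Algorithm~\ref{algo:3query} with the stronger PIR: downloading $O(\alpha d)$ candidate-bucket headers per level aggregates to $O(L)$ blocks at $B=\Omega(\alpha d \log N)$, each PIR answer is a single block, and the top-level access contributes $O(k)$, matching the query half of the claimed overhead.

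The main technical step, and the main obstacle, is the reshuffle into $T^j_{i+1}$ when no server is trusted to be non-colluding. I would split it into two phases. In the \emph{tag-hashing} phase, the client assigns each of the $d^{i-1}(k+s)$ blocks $(v,x)$ to be reshuffled a fresh pseudorandom tag $\tau = F_s(i{+}1,j,e^j_{i{+}1},v)$ under a freshly incremented epoch, and all servers jointly execute $H_i.\build$ on the short \emph{tag-only} records (not on the payloads). By obliviousness of $H_i$ and Corollary~\ref{cor:3uniquetagsreshuff}, each server's view during this phase is a sequence of accesses on freshly uniform distinct tags, simulatable from $n_i := d^{i-1}(k+s)$ alone. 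Since records hashed here are $O(\log N)$ bits, this phase costs $C_\build(H_i)\cdot O(\log N)$ bits per reshuffle, or $O(\beta)$ blocks amortized per virtual query when $B=\Omega(\beta \log N)$.

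In the \emph{matching} phase, the $m$ servers must move the encrypted payloads into the slots computed above while hiding from any coalition $\mathcal{C}$ of $m-1$ servers the correspondence between positions and original blocks. I would route the payloads through a pipeline $\calS_1\to\calS_2\to\cdots\to\calS_m$ in which every server, holding a local copy of the payloads sorted by tag, applies a uniformly random permutation and sends the result to the client, who re-encrypts each record before passing it to the next server. Since at least one server is honest, its secret permutation is information-theoretically independent of $\mathcal{C}$'s view, so the composed permutation is uniform from $\mathcal{C}$'s perspective. The final server $\calS_m$ attaches each received payload to the hash-table slot dictated by its tag and broadcasts the completed table back through the client. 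This phase transmits $O(m_i)$ full blocks per reshuffle, amortizing to $\sum_{i=1}^{L} m_i/(d^{i-1}k)$ per virtual query; combined with $O(k+L)$ from queries and with the tag-hashing cost absorbed into $B$, this yields the stated overhead.

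Security against any coalition $\mathcal{C}$ of size $m-1$ is proved by constructing a simulator $\simulator_\mathcal{C}$ along the lines of Lemma~\ref{lem:3sec}. The query simulation reuses Claims~\ref{claim:queriedonce}--\ref{claim:3keysareuniform} verbatim since the per-level hierarchy mechanics are unchanged, and invokes the $(m,m-1)$-PIR simulator for each PIR. The tag-hashing phase is simulated by running the oblivious simulator of $H_i.\build$ on freshly sampled uniform distinct tags. The delicate step is the matching phase: one must argue that the joint distribution of the per-server re-encrypted pipelines plus the tag-to-slot mapping is simulatable from $|T^j_i|$ alone. This reduces to the honest server's permutation being uniform and independent of $\mathcal{C}$'s prior view, combined with CPA security of the encryption ensuring $\mathcal{C}$ cannot link ciphertexts across hops --- both are standard but must be handled with care because the coalition structure here is genuinely different from, and strictly more adversarial than, the two- or three-server cases.
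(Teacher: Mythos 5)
Your proposal follows the same high-level route as the paper: replicate the hierarchy on all $m$ servers so that an $(m,m-1)$-PIR can be used for queries, replace Lu--Ostrovsky regular hashing with \emph{oblivious} hashing so that the reshuffle access pattern is simulatable without a non-colluding third server, perform the build over short metadata records to absorb the $\beta$-factor into the block size, and finish with an $m$-hop random-permutation matching pipeline whose security rests on the single honest server's secret permutation. That is precisely the paper's decomposition, and both the query- and reshuffle-cost accounting match.

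The one place where your route deviates, and the place where your argument is slightly off target, is the tag-hashing phase. You tag records with $F_s(i{+}1,j,e^j_{i+1},v)$ \emph{before} the oblivious build and hash by tags, then invoke Corollary~\ref{cor:3uniquetagsreshuff} to argue the build is simulatable from $n_i$ ``because the tags are uniform and distinct.'' This imports the three-server reasoning into a setting where it is not what makes the build safe: Definition~\ref{def:oblihash} already guarantees the access pattern of $H_i.\build$ is computationally independent of \emph{any} input of that length, so simulatability comes from the obliviousness of the build, not from pseudorandomness of the hash keys. The paper consequently dispenses with tags for the build entirely and hashes by the (re-encrypted) virtual-address headers; tags reappear only in the matching phase, where they serve a different purpose --- letting $\calS_0$ (your $\calS_m$) link permuted payloads to hash slots without recovering the permutation. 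Your variant would still be correct (tags are $O(\log N)$ bits, same size as headers, and the lookup can be run on $\tau=F_s(\cdot,v)$ in place of $v$), but it adds an unneeded client pass and obscures where the security actually lives. Two smaller imprecisions worth fixing: (i) the phrase ``holding a local copy of the payloads sorted by tag'' is misleading --- the intermediate pipeline hops must carry only re-encrypted $(v,x)$ records without visible tags, with the client attaching tags only at the final hop; otherwise an intermediate colluding server could correlate tags with earlier views; and (ii) the amortized tag-hashing cost you cite as ``$O(\beta)$ blocks per query'' should read $O(1)$ blocks per level per query once the $\log N$-bit tag size is divided by $B=\Omega(\beta\log N)$, giving $O(L)$ overall, which is what the theorem statement absorbs. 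Finally, ``all servers jointly execute $H_i.\build$'' is ambiguous; the paper runs the oblivious build between the client and one designated server, then broadcasts the resulting table, which avoids an $m$-fold redundancy in that step (though for constant $m$ the asymptotics are unaffected).
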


For instantiating the multi-server construction, we suggest using the oblivious variant of the hashing schemes from\cite{GM,KLO,CGLS} (see Appendix~\ref{app:hashing}). Here, in contrary to the construction in Section~\ref{sec:3servers}, two-tier hashing obtains (slightly) better results, both in overhead and in the minimal required $B$ (see Table~\ref{table:results}).
\begin{instant}[$m$-server ORAM using two-tier hashing]\label{inst:m-cgls}
	Assume the existence of one-way functions. Then, for any $m\geq 2$, there exists a ($m$,$m-1$)-ORAM scheme that achieves overhead of $O(\log N/\log\log N)$ for block size of $B=\Omega(\log^2 N)$.
\end{instant}

For clarity of presentation, we first present the special case of our construction in the two-server setting. We then show how to generalize the construction to the setting where more servers are involved. In Appendix~\ref{app:deamortize}, we show how to de-amortize the construction to obtain a worst-case ORAM scheme.

\subsection{Two-Server ORAM: Overview}\label{sec:2servers}
Our two-server ORAM solution is based on the three-server scheme from Section~\ref{sec:3servers}. We make the following modifications to reduce the number of servers.

\paragraph{Back to oblivious hashing.} Now that we limit ourselves to using two servers only, each of which has to hold a replicate of the data for the PIR queries, we lose the ability to perform the reshuffles through a ''third-party''. Hence, we require now that the underlying hashing schemes are oblivious, and the build-up of the hash table is done using the oblivious $\build$ procedures, where the client is the CPU, and one of the servers takes the role of the RAM.

Recall that the tags were essential for the security of the three-server scheme since the reshuffles were made by one of the servers, to which we did not want to reveal the identity of the blocks being reshuffled. Now that the reshuffling is done using oblivious hashing that hides any information about the records that are being hashed, or the hash keys used to hash them, using tags is not necessary anymore. Instead, the blocks are hashed, and accessed, by their headers.
\paragraph{Optimizing the reshuffles.} Creating a hash table at level $i$ using $H_i.\build$, incurs an overhead of $C_\build(H_i)$ when naively applied. We observe that in any hashing scheme (by Definition~\ref{def:hash}), the only input relevant for the build-up of a hash table is the tags or, in our case, the headers of the blocks being reshuffled. Based on this observation, we suggest the following solution. The reshuffles are modified so that the build-up of the hash tables is given, as input, the set of headers, rather than the blocks themselves. Since the headers are smaller than the blocks by a factor of at least $\beta:=\max_i \frac{C_\build(H_i)}{d^{i-1}k}$ (see Theorem~\ref{thrm:multi}), the overhead incurred by the build-ups is cut by $\beta$, making it linear in $d^{i-1}k$.
\paragraph{Finalizing with a matching procedure.} As the headers are hashed, we still have to move the data to the new hash table. To securely match the data to the hashed headers, we suggest a method that involves tagging the data elements, and permuting them randomly by the servers. Thus, tags are still used, however, in a totally different context.
\subsection{Two-Server ORAM: Full Construction}
\paragraph{Data Structure.}We start with the scheme from Section~\ref{sec:3servers}. The server storage remains as is, except the data is now distributed among two servers, rather than three. All levels in the hierarchy, as well as the stash, are duplicated and stored in $\calS_0$ and $\calS_1$. The protocol guarantees that, at the end of every round, the data in the two servers will be identical. 

\paragraph{Query.} Every virtual access is simulated as described in Algorithm~\ref{algo:3query}, with the exception that the target block is queried upon in the hash tables by its virtual address, rather than its tag: $H_i.\lookup(v,\kappa^j_i)$ rather than $H_i.\lookup(\tau,\kappa^j_i)$. Also, all reads and writes, as well as the PIR queries, are made now to $\calS_0$ and $\calS_1$.

\paragraph{Reshuffles.} The key modification made to the scheme lays in the reshuffling procedure. The reshuffles are still performed in the same frequency. However, the roles of the servers change, as only two servers participate in the protocol. First, $\calS_0$ prepares all headers of blocks that have to be reshuffled into the destination hash table, and, together with the client, invokes the appropriate  oblivious $\build$ procedure to hash the blocks into a new hash table.

Now that the headers are shuffled, it remains to match them to the data. The matching procedure begins with tagging the headers. $\calS_0$ sends the shuffled headers, one by one to the client. The client decrypts every header, computes its tag using a new epoch, and sends the tag back to $\calS_0$. The headers corresponding to empty slots in the hash table are tagged using numbered values, e.g. '$empty\circ 1$'. Notice that the number of empty slots in the hash table and stash, combined, is fixed and independent of the input. Next, $\calS_1$ sends the records (headers and data) that correspond to the shuffled headers, one by one, in a random order. Among the actual records, $\calS_1$ also sends as many (numbered) empty records as required to match the number of empty records in the newly-reshuffled hash table. The client tags every record he receives from $\calS_1$, and forwards it $\calS_0$ together with its tag. $\calS_0$ can easily match every record he receives to a header in the hash table or stash, according to the tags. Once this is complete, $\calS_0$ sends the new hash table and stash to $\calS_1$, through the client. Refer to Algorithm~\ref{algo:2reshuffle} for full details.

\begin{algorithm}[h!t!b]
	\caption{Two-Server Construction: Reshuffle}
	\label{algo:2reshuffle}
	\begin{algorithmic}[1]
		\item[\textbf{Reshuffling headers into table $T^j_{i+1}$}]
		\parState{$\calS_0$ sends all records in levels $1,\dots, i$ and the stash, one by one, to the client. The client re-encrypts every record he receives and forwards it to $\calS_1$, while eliminating all empty records. $\calS_1$ inserts every record he receives to a temporary array $Y$. Server $\calS_1$ now sends every header in $Y$ back to $\calS_0$, through the client.}
		\parState{Let $\hat{Y}$ be the array of encrypted headers received by $\calS_0$. The client generates a fresh hashing key $\kappa^j_i\gets H_i.\gen(N)$, and, together with $\calS_0$, invokes $(\hat{T},\hat{S})\gets H_i.\build(\kappa^j_i,\hat{Y})$ to obliviously hash the headers into a hash table and stash.}
		\item[]
		\item[\textbf{Matching data to headers.}]
		\parState{$\calS_0$ sends $(\hat{T},\hat{S})$, record by record, to the client. The client decrypts every header $v$ he receives, and computes a tag $\tau\gets F_s(i + 1,j, e^j_{i+1}, v)$. If the header is empty, then $\tau\gets F_s(i + 1,j, e^j_{i+1}, empty\circ z)$, where $z$ is a counter that starts at 1 and and increments after every empty header. Notice that the number of empty headers, denoted by $Z$, depends only on $i$. The client sends the tag back to $\calS_0$.} \label{step:tagnewhash}
		\parState{$\calS_1$ inserts $Z$ empty records $(empty\circ 1,\cdot),\dots,(empty\circ Z,\cdot)$ to $Y$. Server $\calS_1$ permutes $Y$ randomly, and sends it, one record at a time, to the client.}
		\parState{The client re-encrypts every record $(v,x)$ it receives, and sends it to $\calS_0$ with a tag $\tau$, that is the output of $F_s$ on $v$ with the appropriate epoch.}\label{step:tagrecords}
		\parState{$\calS_0$ matches every tagged record it receives to one of the tags it received in Step~\ref{step:tagnewhash}, and inserts the corresponding record to its appropriate slot (either in $\hat{T}$ or $\hat{S}$).}\label{step:matching}
		\parState{At this point, $\calS_0$ holds the newly reshuffled hash table and stash, headers and data. The tags are discarded. $\calS_0$ sends both the table and the stash to $\calS_1$, via the client. Both servers replace the old stash and $T^j_{i+1}$ with the new data.}
	\end{algorithmic}
\end{algorithm}
\subsection{From Two Servers to $m$ Servers}

We generalize the ideas behind the two-server construction from Section~\ref{sec:2servers} to construct a family of $(m,m-1)$-ORAM schemes. The construction of the two-server scheme consists of two main parts: queries and reshuffles. In order to transform the construction to the multi-server setting, we transform each of these components while maintaining their security.

\paragraph{Query using multi-server PIR.} To obliviously simulate a query to a block, the client follows the protocol used in the two-server construction (Algorithm~\ref{algo:3query}). However, now that we want to achieve privacy against any colluding subset of corrupt servers, we use an $m$-server PIR protocol which guarantees such a privacy. That is, instead of invoking two-server PIRs to query blocks from the stash and hierarchy levels, the client now uses an $(m,m-1)$-PIR protocol involving all $m$ servers, where the joint view of any $m-1$ servers is (computationally) independent of the target index. In particular, we can use the straight-forward $m$-server generalization of the basic PIR protocol from\cite{PIR}. Since this protocol, as well as many known $m$-server PIRs, follow the standard PIR setting where the data is assumed to be replicated in all of the servers, the servers during the ORAM execution will hold identical replicates of the same data structure.

\paragraph{Extending the matching procedure.} Reshuffles of levels are done in the same frequency, and in a very similar manner as in the two-server protocol. The only change we make is in matching procedure. To match the content to the tags, we cannot rely only on two servers, since they might be both corrupt. Instead, we let all servers participate. The reshuffling procedure from Algorithm~\ref{algo:2reshuffle} is followed up to Step~\ref{step:tagrecords}. After the client receives the permuted records from $\calS_1$, he re-encrypts them and forwards them to $\calS_2$. $\calS_2$, in its turn, randomly permutes the records it receives, and forwards them to $\calS_3$ (if it exists), through the client. This continues until all servers, except $\calS_0$, have received the records and permuted them on their own. Once they all had, the client tags the records and sends them to $\calS_0$, who matches them to the shuffled headers, as described in Steps~\ref{step:tagrecords} and~\ref{step:matching}. Lastly, the new final hash table and stash are sent to all servers.

We note that different servers can be used to reshuffle different levels, thus distributing the load of work and communication equally among all servers.

\subsection{Analysis} \label{sec:multianalysis}
\subsubsection{Complexity.}
The query complexity of the $m$-server scheme is identical to that of the three-server construction, and is equal to $O(k+L)$. To obliviously construct a hash table and a stash for a level $i$, the client and the servers exchange $C_\build(H_i)=O(\beta d^{i-1}k)$ records (recall $\beta:=\max_i \frac{C_\build(H_i)}{d^{i-1}k}$). However, since the build-up is done over tags of size $\log N$ bits, rather than whole blocks of size $\Omega(\beta\log N)$, the cost of $O(\beta d^{i-1}k)$ tags translates to $O(d^{i-1}k)$ overhead in blocks. The matching procedure also has a linear cost in the size of the level, that is $O(m_i)$. Since the reshuffling of the level occurs every $d^{i-1}k$ rounds, this amortizes to $O(1+m_i/d^{i-1}k)$ overhead per level, and $O(L+\sum_{i=1}^L \frac{m_i}{d^{i-1}k})$ overall.
\subsubsection{Security.}
Following Definition~\ref{def:m-oram}, it suffices to prove the following Lemma.
\begin{lemma}[Security of the $m$-server ORAM]\label{lem:msec}
	Let $\view_a(\vec{y})$ be the view of server $\calS_a$ during the execution of the $m$-server ORAM protocol, described in Section~\ref{sec:2multiservers}, over a virtual access pattern $\vec{y}=((v_1,x_1),\dots,(v_\ell,x_\ell))$. For any subset of servers $A\subseteq \{0,\dots,m-1\}$ of size $|A|\leq m-1$, there exist a simulator $\simulator_A$, such that for every $\vec{y}$ of length $\ell$, the distributions $\simulator_A(\ell)$ and $\left<\view_a(\vec{y})\right>_{a\in A}$ are computationally indistinguishable.
\end{lemma}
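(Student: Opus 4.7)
The plan is to follow the structure of the three-server security proof (Lemma~\ref{lem:3sec}), adapting each step to the new setting. I would begin with the standard hybrid-argument preliminaries: replace the symmetric encryption with an ideal scheme so that ciphertexts are indistinguishable from random strings of the appropriate length, and replace the PRF $F_s$ used for the matching tags with a truly random function. Given any coalition $A$ with $|A|\le m-1$, I would then construct $\simulator_A(\ell)$ by describing separately the simulated transcripts at each reshuffle and at each query step, and argue their indistinguishability from $\left<\view_a(\vec{y})\right>_{a\in A}$ by reducing to the security of the underlying PIR, oblivious hashing, and PRF primitives.

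For the reshuffles, the invariants inherited from Claims~\ref{claim:stash} and~\ref{claim:reshuff} still apply (the stash remains full and each level has a deterministic load after every reshuffle), so the number of records that flow between any pair of servers depends only on the round counter $t$ and the level $i$. The build-up of the new hash table, performed by $\calS_0$ and the client on the stream of encrypted headers, is simulatable by invoking the simulator guaranteed by the obliviousness of $H_i$: this is exactly where we exploit the assumption that $H_i$ is an oblivious hashing scheme, rather than the plain one used in Section~\ref{sec:3servers}. The sequence of final-table and stash records shipped from $\calS_0$ to the remaining servers has deterministic length and encrypted contents, so it is likewise simulatable by arbitrary ciphertexts.

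The main obstacle, and the step requiring the most care, is the extended matching phase. The coalition $A$ sees (i) each permutation $\pi_i$ applied by a colluding server $\calS_i\in A$, (ii) the encrypted records flowing into and out of each $\calS_i\in A$, and (iii) at $\calS_0$ (if $\calS_0\in A$), the tags and tagged records forwarded by the client. The key point is that since $|A|\le m-1$, the chain $\calS_1\to\calS_2\to\cdots\to\calS_{m-1}\to\calS_0$ contains at least one honest permuter, whose uniform random permutation composes with those chosen by $A$ to give a uniform random overall permutation of the records en route to $\calS_0$; hence the order in which tagged records arrive at $\calS_0$ is independent of the original ordering. For the tags themselves, an analogue of Claim~\ref{claim:3uniquetags} shows that headers entering a reshuffle are pairwise distinct (padding empty slots with numbered dummies), so the same argument as in Claim~\ref{claim:3tagsareuniform} yields that the sequence of tags computed by the client is indistinguishable from a sequence of distinct uniform values. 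Together this lets $\simulator_A$ sample a random permutation and random distinct tags to reproduce the matching transcript.

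Finally, the simulation of queries essentially mimics the three-server analysis. The top-level reads/writes and stash-header downloads have deterministic size, and each PIR invocation is simulated using the simulator of the underlying $(m,m-1)$-PIR scheme, whose existence is guaranteed by its privacy against $m-1$ colluders. The remaining observable quantity is the sequence of lookup positions $\langle Q_i^j\rangle_a$ that each server observes; unlike the three-server construction, here the client feeds virtual addresses directly into $H_i.\lookup$, so in place of Claims~\ref{claim:3tagsareuniform}/\ref{claim:3keysareuniform} we appeal to the obliviousness property of $H_i.\lookup$ to simulate this sequence on dummy inputs under a freshly sampled key. Composing all the simulated pieces and unwinding the hybrids yields $\simulator_A(\ell)\approx_c\left<\view_a(\vec{y})\right>_{a\in A}$, as required.
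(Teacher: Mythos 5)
Your proposal matches the paper's own proof essentially step for step: the same hybrid preliminaries (ideal encryption, PRF replaced by a random function), the same reliance on Claims~\ref{claim:stash} and~\ref{claim:reshuff} to fix the lengths of the reshuffle transcripts, the same appeal to the obliviousness of $H_i.\build$ and $H_i.\lookup$, the same uniqueness-of-tags observation, the same "at least one honest permuter makes the composed permutation uniform and unknown to the adversary" argument for the matching phase, and the same use of the $(m,m-1)$-PIR simulator for the queries. The proposal is correct and takes the same approach as the paper.
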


We make the same assumptions taken in the analysis of the three-server scheme: encryption is secure, and the tagging functions are random.

First, consider the view of the servers at the reshuffles in the two-server case. Claims~\ref{claim:stash} and~\ref{claim:reshuff} are true for the multi-server scheme as well, therefore, the amount of encrypted data exchanged in Step 1 of Algorithm~\ref{algo:2reshuffle} is oblivious, and therefore so is the view of the servers. From Definition~\ref{def:oblihash}, the view seen in Step 2 can be also simulated by producing an access pattern for an arbitrary execution of the oblivious $H_i.\build$ procedure. As for the matching procedure (Steps 3-7), the view of $\calS_1$ consists of the newly constructed hash table and stash, both encrypted and of fixed size, and therefore can be simulated arbitrarily. $\calS_0$ receives a sequence of tags computed using $F_s$. We show below that these tags are computed for unique headers.

\begin{claim} \label{claim:2uniquetagsreshuff} The tagging function $F_s(\cdot)$ will not be computed twice on the same input in Step~3 of Algorithm~\ref{algo:2reshuffle} throughout the executions of the algorithms during the ORAM simulation.
\end{claim}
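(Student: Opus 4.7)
The plan is to show that two invocations of $F_s$ within Step~\ref{step:tagnewhash} of Algorithm~\ref{algo:2reshuffle} on the same input would contradict either the freshness of epochs (across different reshuffles) or the uniqueness of headers being tagged (within a single reshuffle). First I would partition potentially colliding pairs of invocations into two cases according to whether they occur within one execution of Algorithm~\ref{algo:2reshuffle} or across two different executions. For the cross-execution case, the argument is immediate from the structure of the input to $F_s$: every tag is computed as $F_s(i{+}1, j, e^j_{i+1}, \cdot)$, and either $(i,j)$ differs (the reshuffles target different tables), or the target table is the same but its epoch $e^j_{i+1}$ has been refreshed between the two reshuffles, so the two inputs disagree in at least one of the first three coordinates.

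For the intra-execution case, I would argue that the set of values $v$ that the client sees in Step~\ref{step:tagnewhash} of a single reshuffle are pairwise distinct. This set consists of (i) the headers of the real and dummy records that survived the oblivious $H_i.\build$ step into $(\hat T, \hat S)$, and (ii) the numbered empty-slot labels of the form $empty\circ z$ that the client synthesizes for slots whose header decrypts to $empty$. The labels of type (ii) are distinct from each other by construction (the counter $z$ is incremented after each empty header) and distinct from type (i) labels because real/dummy headers are never of the form $empty\circ z$. For type (i), I would invoke the two-server analogue of Claim~\ref{claim:3uniquetags}: the records entering the reshuffle are exactly those drawn from levels $0,\dots,i$ and the stash, and each non-empty label in the hierarchy is unique at any point in time (queried virtual addresses are replaced at their old location by freshly-counted dummy-hash blocks, and all dummy labels carry a global monotone counter). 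Hence no two headers present in this reshuffle carry the same label.

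The main obstacle, and the only non-routine step, is establishing the two-server analogue of Claim~\ref{claim:3uniquetags}. The three-server proof was short because the invariant was easy to maintain when reshuffles were done by a third party; in the two-server scheme, reshuffles are performed via oblivious $\build$ executed by $\calS_0$, and the matching procedure (Steps~\ref{step:tagnewhash}--\ref{step:matching}) re-associates headers with data. I would verify that this procedure preserves the invariant: every real virtual address present in the data structure still appears in exactly one location after the reshuffle, empty slots are re-labeled with fresh $empty\circ z$ counters local to that reshuffle (and these labels are discarded at the end, so they do not persist across reshuffles to cause later collisions), and the dummy-hash blocks written at the top level during queries continue to use a strictly increasing global counter $t$. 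Once this invariant is in place, the corollary that $F_s$ is evaluated at most once on each input in Step~\ref{step:tagnewhash} follows by combining the two cases above.
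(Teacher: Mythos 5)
Your proof is correct and follows essentially the same route as the paper: reduce to distinctness of the $empty\circ z$ counter labels and to the two/$m$-server analogue of Claim~\ref{claim:3uniquetags} for non-empty headers. The paper's proof is terser and leaves the cross-execution case (fresh epoch or distinct $(i,j)$) implicit, whereas you spell it out explicitly, but the decomposition and key invariant you invoke are the same as the paper's.
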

\begin{proof}
	For headers $v$ that are of the form $'empty\circ z'$ the claim is trivial. It is also easy to verify that Claim~\ref{claim:3uniquetags} is true for the $m$-server ORAM construction as well, thus implying the claim for non-empty headers.\qed
\end{proof}

Since unique, the sequence of tags seen by the server is indistinguishable from uniform distinct values, and $\simulator_0$ simulates them as such. Lastly, $\calS_0$ receives a sequence of tagged records. The records themselves are encrypted, and therefore can be simulated. The tags were obtained by tagging the same set of headers already tagged previously, however, in an order that is generated by $\calS_1$. Since $\calS_1$ applies a uniformly random permutation, that is not known to $\calS_0$ and is \emph{independent of previous reshuffles}, over the headers, the view of $\calS_0$ in Step 5 of the algorithm can be simulated by taking the sequence of tags generated previously by $\simulator_0$, and permute them randomly.

In the multi-server setting, we take the seemingly hardest case in which $\calS_0$ is among the $m-1$ corrupt servers. To simulate the view of $\calS_0$, up to Step~\ref{step:tagrecords} of the algorithm, we follow the steps taken above. The combined view of all other corrupt servers consists of sequences of encrypted records of fixed lengths. We argue that the sequence of tagged records, received by $\calS_0$ once all servers took their turn, can be simulated by randomly permuting the tags that were seen previously in Step~\ref{step:tagnewhash}. The key observation to make here is that, since there is at least one honest server that permutes the records randomly, using a permutation unknown to the adversary, the order of the tagged records is random from the adversary's point of view, and is independent of the input. More specifically, every permutation that the honest server may choose yields a different order of the records. Since the permutation is chosen uniformly at random, the order of the records distributes uniformly.

The access pattern seen by the servers during the queries is even easier than the three-server construction from Section~\ref{sec:3servers}, since now we rely on the obliviousness of the $\lookup$ procedure (see Definition~\ref{def:oblihash}), rather than on the randomness of the tags (as in Claim~\ref{claim:3tagsareuniform}). From Definition~\ref{def:oblihash}, the sequence of $H_i.\lookup(v,\kappa^j_i)$ values is indistinguishable from a sequence generated for an arbitrary sequence of addresses $v$ using random hash keys. Thus, $\simulator_A$  simulates the queries to the hash tables by generating random keys using $\gen$, and computing $\lookup$ over an arbitrary sequence of addresses. The transcripts of the $(m-1)$-private PIR protocols, can be simulated from the security definition of PIR.

\section{Conclusion and Open Questions}
In this paper, we presented a family of efficient distributed ORAM schemes, that achieve both sub-logarithmic overhead, and commit to a small block size. Although our parameters match the lower bound proven in\cite{PIRORAM} for PIR-based ORAM, we use techniques that are not captured in the model considered in their proof. It remains an open question whether we can further develop these techniques to surpass their bound, without using additively homomorphic encryption (AHE), or linear server work.

By allowing the servers to perform a linear amount of computation, we have shown that constant overhead ORAM is achievable. Our 4-server construction is simple, and does not use a complex data structure for the server storage, but rather stores it as a plain array. The simplicity of the construction opens the possibility to use it as a building block, to further reduce the overhead of distributed ORAM protocols, in exchange for increasing server-side computations.


\newcommand{\etalchar}[1]{$^{#1}$}

\appendix
\section{From Amortized Overhead to Worst-Case Overhead}\label{app:deamortize}
Up to this point, in all of our constructions and analysis, we referred to the amortized overhead an ORAM protocol incurs. Although we have achieved an overall sub-logarithmic overhead, some operations, specifically reshuffles of large levels, might require sending $\Omega(N)$ blocks in a round. This might be unwanted in several applications, in which we may have to guarantee that every query, without exceptions, incurs a bounded worst-case amount of overhead. Ostrovsky and Shoup\cite{OS} were the first to construct a worst-case ORAM scheme. They achieved a $O(\log^3 N)$ worst-case overhead by de-amortizing the classic hierarchical solution from\cite{GO}. Since then, many works have tackled the worst-case version of the ORAM problem, either by de-amortizing existing schemes (e.g. \cite{KLO}), or by constructing new tree-based ORAMs\cite{TREE,PORAM}.

In this section, we show how to de-amortize our two-server construction from Section~\ref{sec:2servers} in order to obtain a two-server oblivious RAM scheme with an equal \emph{worst-case} overhead. The same method can be used also to de-amortize the $(m,m-1)$-ORAM schemes, however, it fails when applied to the three-server scheme from Section~\ref{sec:3servers}.
\subsection{Overview}
We follow the techniques used in\cite{OS} and \cite{KLO}. The idea is to 'spread' the reshuffles and perform them throughout many rounds, rather than in a single round. Once a level has to be reshuffled, it is not reshuffled immediately, instead, a \emph{reshuffling process} starts:
\begin{enumerate}
	\item The full level is put aside and marked as 'inactive', while the protocol continues as usual with a new empty 'active' instance of the level.
	\item In the round when the process starts, and in every round to follow, a fixed amount of $\rho$ blocks is exchanged between the client and servers as part of the reshuffle. The blocks are shuffled into an empty hash table in a larger active level. Up until the completion of the process, the destination hash table is not accessed and is treated as if it was empty.
	
	During the process, the client might have to query the inactive levels for blocks that reside there. Therefore, during the search for a block, PIR protocols are performed over the inactive levels as well.
	\item The process ends when the reshuffle is complete and all blocks reside in a newly created hash table in both servers.
\end{enumerate}

The rate of the process, i.e. $\rho$, has to be chosen wisely. On the one hand, $\rho$ has to be small enough so the overall amortized overhead, caused by the reshuffling of all levels, does not exceed $O(k+L)$. On the other hand, very small $\rho$ means slow long-lasting reshuffling processes, which might lead to storage overflow and undesired overhead, as reshuffles will overlap and stack over time. In fact, for every level $i=1,\dots,L$, we choose a different rate $\rho_i$, that depends on the parameters of the hashing scheme $H_i$, and that guarantees that a reshuffling process of level $i$ ends before a new one starts. Thus, it would be sufficient to have only two instances of every level, alternating between active and inactive.

This approach introduces few obstacles, for which we suggest the following solutions.
\paragraph{Reshuffles are Unproportional to Level Size.} Take the top level as an example. The top level has to be reshuffled every $k$ queries. The level is not always reshuffled into the next level, but it may be reshuffled, together with other blocks, to much larger ones (levels of size up to $\Theta(N)$). Reshuffles to large levels clearly require large amount of communication, therefore require large number of rounds to complete, during which, the top level is reshuffled many times. This situation forces us to maintain lots of inactive instances of the top level, as it is reshuffled very frequently, while the reshuffles last for unproportionally long periods.

To solve this problem, we reshuffle levels separately of each other: level $i$ is always reshuffled into the next empty hash table in level $i+1$ (when all its hash tables are full). Furthermore, we add an extra hash table to every level in the hierarchy, making it $d$ tables in a level. Conceptually, these two modifications do not change much in the scheme: levels are still reshuffled in the same frequency, and the reshuffling of levels $1,\dots, i$ to level $i+1$ is replaced by a cascading series of reshuffles, starting from the top level, and ending with level $i$ reshuffled into level $i+1$. However, this change is necessary for the de-amortization.

\paragraph{Inactive Levels are Immutable.} In the amortized two-server scheme, whenever a block is found and read from the hierarchy, it is wiped out and marked as a dummy block, before it is re-written to the top level. To maintain the integrity of the reshuffling process, we clearly cannot alter the content of an inactive level while it is being reshuffled. Therefore, we allow having duplicated blocks in the hierarchy, and we avoid wiping out blocks after they are queried.
\paragraph{Duplicates Compromise Security.} The uniqueness of tags in the hierarchy played an essential role in the security of the scheme in Section~\ref{sec:2servers}. Now that we allow duplicates, the scheme is no longer secure, as repeated tags, which indicate identical block addresses, might appear. We suggest the following solution. We add a single bit to the header of every block that indicates whether its content is up to date. The blocks inserted to the top level are marked as \emph{up-to-date}. Every time a block is read, it is marked as \emph{outdated}. Before the start of the reshuffling process of any level, the client eliminates outdated blocks, and thus we guarantee uniqueness of tags in every hash table, which is, in fact, sufficient for the security of the scheme.
\subsection{De-amortization in Details}
We modify the scheme from Section~\ref{sec:2servers} as follows.
\paragraph{Data Structure.} The  following changes are made to the data structure:
\begin{itemize}
	\item[--] An additional hash table is added to every level.
	\item[--] We double every level in the hierarchy into two instances: one initially marked as 'active', the other as 'inactive'. The two instances are identical in structure, but they operate separately, and each has its own hashing key.
\end{itemize}

\paragraph{Block Headers.} In addition to the virtual address, we require that the header of every block contains an \emph{up-to-date} bit, that indicates whether the block is up-to-date and its content is relevant. The bit is encrypted together with the virtual address, however, it is not involved in the computation of the tags. The notion of a block is extended to ($v,u,x$), with $u\in\{\text{\emph{up-to-date}},\text{\emph{outdated}}\}$ being the up-to-date bit.
\paragraph{Queries.} The query procedure (Algorithm~\ref{algo:3query}) is worst-case efficient. Nonetheless, in order to support the new reshuffling method, we make a couple of modifications:
\begin{itemize}
	\item[--] The target block is searched for in the inactive levels as well. Clearly, out-to-date blocks are ignored during the scan.
	\item[--] Instead of wiping out blocks that are found in the hierarchy, and replacing them with dummy hash blocks (Algorithm~\ref{algo:3query}, Step~\ref{step:reuploadheaders}), the client marks them as unread, by toggling the \emph{up-to-date} bit off.
\end{itemize}
\paragraph{Reshuffles.} As a first step towards de-amortized reshuffles, we make the following changes.
\begin{itemize}
	\item[--] A level is reshuffled once its all hash tables are full.
	\item[--] A level is always reshuffled to the next empty hash table in the subsequent level (the first bullet guarantees that there is at least one empty hash table).
\end{itemize}
Notice that, as a consequence to these changes, the last hash table of every level is never accessed while it is active, as the level is immediately reshuffled once it is filled. Thus, the purpose of adding an extra hash table is to 'make room' for blocks that are pending to be reshuffled to a larger level.

The reshuffling of a level begins with marking it as inactive, and replacing it by the other instance of the level, which is guaranteed to be clear and already reshuffled to the next level, as we show later on (Lemma~\ref{lem:reshufoverlap} below).

The reshuffling proceeds following the reshuffling procedure of the scheme in Section~\ref{sec:2servers}, and is performed over many rounds. In every round of the reshuffling procedure of level $i$, $\rho_{i}$ blocks are exchanged between the client and servers. We choose $\rho_{i}=\Theta(\frac{m_i}{d^{i-1}k})$ (exact value is provided in the analysis section).

Lastly, the reshuffling procedure is adjusted to handle outdated blocks. Just before the reshuffling begins, the client replaces all outdated blocks, that reside in the level, with numbered dummy hash blocks. More specifically, $\calS_0$ (or $\calS_1$) sends all records in the level, one by one, to the client. The client decrypts every record $(v,u,x)$ it receives. If $u=\text{\emph{up-to-date}}$, then the client re-encrypts the record and sends it back to both $\calS_0$ and $\calS_1$. Otherwise, the client replaces the record with $(dummyHash\circ r,\cdot,\cdot)$, where $r$ is a counter that increments whenever a dummy hash record is created (this is the only place it happens, as we do not create dummy blocks in the queries), and sends it back to the servers. The servers, in their turn, replace the received records with the records in the level, and proceed with the reshuffling.

Although the reshuffling processes of a level do not overlap, reshuffling processes of different levels might be running simultaneously. The chosen values of $\rho_i$ guarantee that this does not cause an undesired blow-up in the overhead of the protocol. However, now that the client has to maintain the state of many reshuffling processes in parallel, his working memory is no longer constant. To overcome this issue, the client encrypts the state of every ongoing reshuffling process, and stores it in one of the servers. When a reshuffling process is resumed, the client downloads its state, and proceeds with the reshuffling. Once $\rho_i$ blocks were exchanged, the client pauses the process, and uploads the new state to the server. Thus, at every moment, the client holds the state of a single reshuffling process, which is of constant size.

\subsection{Analysis}
\begin{claim}\label{claim:reshufoverlap}
	For a carefully chosen value $\rho_i\in\Theta(\frac{m_i}{d^{i-1}k})$, the reshuffling processes of level $i-1$ in the de-amortized two-server scheme do not overlap.
\end{claim}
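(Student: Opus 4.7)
The plan is to reduce the claim to a direct comparison of two quantities: the period $P_{i-1}$ between the starts of two consecutive reshuffles of level $i-1$, and the duration $D_{i-1}$ of a single such reshuffle when $\rho_i$ blocks of work are performed per round. It then suffices to choose $\rho_i$ so that $D_{i-1}\leq P_{i-1}$, which by induction on the hierarchy also guarantees that whenever a reshuffle of level $i-1$ begins, the partner inactive instance of level $i-1$ has already been fully reshuffled and is clear.

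First, I would compute $P_{i-1}$ by a short induction up the hierarchy. With the de-amortization rule that a level is reshuffled exactly when all $d$ of its hash tables are full, and the observation that each such table is filled by exactly one reshuffle coming down from the preceding level, the top-level rate $P_0=k$ and the recurrence $P_j=d\cdot P_{j-1}$ give $P_{i-1}=d^{i-1}k$.

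Next, I would bound the total per-reshuffle communication $R_{i-1}$ (in blocks) by summing the three substeps of the (modified) Algorithm~\ref{algo:2reshuffle}. The preliminary pass that replaces outdated blocks in level $i-1$ with numbered dummies reads and re-uploads the $d$ tables of level $i-1$, costing $\Theta(d\cdot m_{i-1})=\Theta(m_i)$ blocks, where the final equality uses $m_i=\Theta(d\cdot m_{i-1})$ (hash tables in consecutive levels differ by a factor of $d$ in their element count for any reasonable hashing scheme). The oblivious $H_i.\build$ call is executed on headers of $O(\log N)$ bits each; by the block-size requirement $B=\Omega(\beta\log N)$ of Theorem~\ref{thrm:multi} with $\beta:=\max_i C_\build(H_i)/(d^{i-1}k)$, its $C_\build(H_i)\leq \beta d^{i-1}k$ header-records translate to $O(d^{i-1}k)=O(m_i)$ blocks of communication. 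The matching procedure contributes another $O(m_i)$ tagged blocks relayed through the client. Summing, $R_{i-1}=\Theta(m_i)$.

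Finally, setting $\rho_i:=c\cdot m_i/(d^{i-1}k)$ for a constant $c$ exceeding the hidden constant in $R_{i-1}=\Theta(m_i)$ yields $D_{i-1}=R_{i-1}/\rho_i\leq d^{i-1}k=P_{i-1}$, so every reshuffle of level $i-1$ completes before the next one is initiated, and in particular the two alternating instances of level $i-1$ suffice. The main subtlety I expect is accounting for substep (b): taken naively, the build cost in records exceeds the budget by the factor $\beta$, and it is only the tag-versus-block size gap built into Theorem~\ref{thrm:multi} (hashing on headers, not full blocks) that compresses the build to $O(d^{i-1}k)$ blocks. Verifying that this saving is preserved by the de-amortization, i.e.\ that pausing and resuming the build does not inflate the per-step communication beyond $\rho_i$, is the key technical point of the argument.
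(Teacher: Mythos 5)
Your proof is correct and follows the same approach as the paper's: compute the start-to-start period of level-$(i-1)$ reshuffles as $d^{i-1}k$, bound the per-reshuffle communication by a single quantity on the order of $m_i$, and choose $\rho_i$ so that the duration $R_{i-1}/\rho_i$ does not exceed the period. The paper states the cost bound as $c_i(d^{i-1}k+m_i)$ by citing Section~\ref{sec:multianalysis} and notes the $d^{i-1}k$-round frequency in one line, whereas you unfold the cost into the three substeps (outdated-block pass, header-only $\build$, matching) and derive the period via the recurrence $P_0=k$, $P_j=dP_{j-1}$; these are the same argument at different levels of detail, and your $\Theta(m_i)$ agrees with the paper's $O(d^{i-1}k+m_i)$ since any scheme storing $d^{i-1}(k+s)$ records has $m_i=\Omega(d^{i-1}k)$. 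Your flagged subtlety about pausing and resuming the $\build$ is not treated in the paper's proof of the claim either; it is handled by the surrounding text (the client checkpoints the encrypted reshuffling state on the server), so you were right to flag it but it does not change the claim's proof.
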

\begin{proof}
	During the reshuffling into a hash table in level $i$, $O(d^{i-1}k + m_i)$ blocks are exchanged between the client and servers (see analysis in Section~\ref{sec:multianalysis}), specifically, there exists a constant $c_i$, s.t. the cost of the reshuffling is bounded by $c_i(d^{i-1}k + m_i)$. Therefore, the reshuffling process into level $i$ lasts less than $\frac{c_i}{\rho_i}(d^{i-1}k + m_i)$ rounds, and is initiated every $d^{i-1}k$ rounds. Hence, for $\rho_i > \frac{c_im_i}{d^{i-1}k}+1$ the processes do not overlap.\qed
\end{proof}

\paragraph{Query Overhead.} It was shown in Section~\ref{sec:2servers} that the worst-case overhead incurred by the queries in the amortized two-server scheme is $O(k+L)$, the changes made to the query procedure do not change that.

\paragraph{Reshuffling Overhead.} The reshuffling of  level $i-1$ to a hash table in level $i$ contributes an amount of $\rho_i$ blocks to the worst-case overhead. Summing over all levels, we get $O(\sum_{i=1}^L \frac{m_i}{d^{i-1}k})$.

\section{Application to Secure Multi-Party Computation}\label{app:mpc}
In this section, we briefly describe how our distributed ORAM constructions can be used in the setting of secure multi-party computation of RAM programs to achieve efficient protocols. The idea of using oblivious RAM for secure computation was suggested in prior work\cite{OS,LO,MPCORAM}. Ostrovsky and Shoup\cite{OS} were the first to propose a method to use oblivious RAM protocols with simple client circuits (e.g. \cite{KLO}), to construct secure computation protocols for functions that are naturally presented as RAM programs, rather than circuits, and thus avoiding the heavy cost caused by unrolling such programs into circuits. The main drawback of their approach is the fact that they use single-server ORAM as a building block, over which they apply two-server PIR\cite{PIR}. That is, they do not take full advantage of the existence of two servers. Lu and Ostrovsky\cite{LO} make a better use of the two servers and achieve a better overhead.

The main idea behind using distributed ORAM for secure computation is as follows. To securely compute a RAM program in an $m$-party setting, we use an $m$-server ORAM protocol. Each of the servers is simulated by one of the parties. From the security of the underlying ORAM, the secrecy of the inputs is protected from the views of the servers. However, it is not guaranteed that the client's view is oblivious in the inputs, and therefore, the client has to be simulated securely. Specifically, the client's state (its local memory) is secret-shared among the $m$-parties. To simulate the next query in the ORAM protocol, the parties invoke a secure computation protocol to simulate the client's circuit, and compute the query to the servers, and the client's next state, given the current state. In all of our constructions, the local computations performed by the client can be described by circuits of size linear in the required word size, therefore, using a secure computation protocol that incurs a constant overhead in the size of the circuit (e.g. \cite{IKOS}) does not cause an undesired blowup.

By using any of our constructions, together with a secure circuit computation protocol with the appropriate number of parties and privacy, we achieve the following results. We refer the reader to \cite{OS,LO} for further details.

\begin{theorem}
	Suppose there exists a symmetric-key encryption scheme and a hash function modeled as a random function or an efficient PRF. Suppose that for any constant $m\geq 2$, there exists an $m$-party secure circuit computation protocol with constant overhead, that is private against an adversary corrupting an arbitrary subset of parties (e.g. \cite{IKOS}). Then, to securely compute a RAM program that runs in $T(n)$ time with access to $S(n)$ space, with program size (including inputs) bounded by $\Lambda(n)$, for input of size $n$ words, there exist
	\begin{itemize}
		\item[--] a three-party secure RAM computation protocol in the semi-honest model with $O(\omega(1)\cdot\log (T+\Lambda)/\log\log (T+\Lambda))$ multiplicative overhead in communication and computational complexity. To achieve optimal overhead, the protocol requires that the program runs over data of $\Omega(\log^{1+\epsilon} n)$-bit words (or, alternatively, $O(\log (T+\Lambda)/\log\log (T+\Lambda))$ overhead with $\Omega(\log^{1.5+\epsilon}n)$-bit words).
		\item[--] for every constant $m\geq2$, an $m$-party secure RAM computation protocol in the semi-honest model, that is private against an adversary corrupting an arbitrary subset of parties, with $O(\log (T+\Lambda)/\log\log (T+\Lambda))$ multiplicative overhead in communication and computation complexity. To achieve optimal overhead, the protocol requires that the program runs over data of $\Omega(\log^{2} n)$-bit words.
		\item[--] a four-party secure RAM computation protocol in the semi-honest model with constant multiplicative overhead in communication and $O(S(n))$ overhead in computation complexity. To achieve optimal overhead, the protocol requires that the program runs over data of $\Omega(\lambda\log n)$-bit words, where $\lambda$ is a security parameter. In this construction we also assume the existence of constant-depth PRGs (with minimal expansion)~\cite{CRYPTONC0}.
	\end{itemize}
	
	all protocols incur an additive one-time cost of $O(\Lambda\log \Lambda/\log\log \Lambda)$ for setup.
\end{theorem}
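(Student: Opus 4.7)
The plan is to plug each of our distributed ORAM constructions (Instantiations~\ref{inst:fss}, \ref{inst:three-klo}, \ref{inst:three-cgls}, and~\ref{inst:m-cgls}) into the generic ``ORAM-to-MPC'' compiler of Ostrovsky and Shoup~\cite{OS}, later refined by Lu and Ostrovsky~\cite{LO}, with one party emulating each ORAM server. Concretely, for an $(m,m-1)$-ORAM, we assign one of the $m$ computation parties to play the role of each server. The server-side code is executed in the clear by the corresponding party on its local share of the server storage; by the $(m-1)$-privacy of the ORAM, the joint view of any $m-1$ parties in their server roles is computationally independent of the virtual access pattern. What remains is to emulate the \emph{client}, whose state (local registers, loop counters, hash keys, PRF key, epoch table, etc.) must be kept hidden from every individual party: we additively secret-share the client state across the $m$ parties, and every time the ORAM protocol calls for a client computation, the parties run a secure-circuit MPC subprotocol on their shares to produce (i) the outgoing messages to each server-party and (ii) the updated client-state shares.

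First I would verify that, for every one of our constructions, the client's per-step computation can be described by a circuit of size linear in the block size $B$ plus $\mathrm{poly}(\log N,\lambda)$. This is clear from inspection: in the hierarchical constructions (Algorithms~\ref{algo:3query} and~\ref{algo:2reshuffle}) the client only re-encrypts blocks as they stream through, evaluates a PRF $F_s$ on short inputs, runs $H_i.\gen$/$H_i.\lookup$ (which are themselves small circuits for the hashing schemes of~\cite{KLO,CGLS}), and assembles the short PIR queries; in the four-server construction, the client merely invokes the DPF key-generation of~\cite{FSS} and XORs two shares. Since each of these operations is realizable by a linear-size circuit (using the constant-overhead encryption and PRFs of~\cite{IKOS,CRYPTONC0}, which is exactly why we invoke those references for the DPF case), a constant-overhead $m$-party secure circuit protocol such as that of~\cite{IKOS} emulates the client on every step with only a constant blow-up over the bandwidth and computation already incurred by the plain ORAM. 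Hence the MPC overhead per RAM step matches the ORAM overhead, up to constants.

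Next I would translate the ORAM bounds of the four relevant instantiations into the claimed MPC bounds by identifying $N$ with $T(n)+\Lambda(n)$, since the simulation must obliviously support up to $T$ RAM steps over a memory of size at most $S(n)+\Lambda(n) \le T(n)+\Lambda(n)$. The three-party bound follows from Instantiations~\ref{inst:three-klo} and~\ref{inst:three-cgls} with $d=\log^{\epsilon} N$, the $m$-party bound from Instantiation~\ref{inst:m-cgls} (whose $(m,m-1)$-privacy directly gives privacy against any adversary corrupting an arbitrary proper subset of the parties), and the four-party bound with constant bandwidth overhead and $O(S(n))$ computation per step from Instantiation~\ref{inst:fss}. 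The block-size lower bounds in each instantiation dictate the corresponding word-size requirements stated in the theorem. Finally, the one-time additive setup cost $O(\Lambda\log\Lambda/\log\log\Lambda)$ accounts for obliviously loading the program and the initial data of combined size $\Lambda$ into the lowest hierarchical level, which is a one-shot oblivious sort (or oblivious hashing) of $\Lambda$ elements and can be executed once under the same MPC wrapper.

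The main obstacle I expect is ensuring that, when the client's code is replaced by its secret-shared emulation, the \emph{structural} security argument of the underlying ORAM still goes through: the simulator used in Lemmas~\ref{lem:3sec} and~\ref{lem:msec} assumes a semi-honest but monolithic client, whereas here the client is replaced by an MPC that reveals to each party both its server-view and its share of the client-state/messages. The clean way around this is to argue in two composable hybrids — first replace the real client with an ideal functionality that provides only the server-bound messages to each party, invoking the simulation-based security of the MPC subprotocol (from~\cite{IKOS}) to bound the statistical distance; then invoke the ORAM simulators $\simulator_A$ of Lemma~\ref{lem:msec} (respectively Lemma~\ref{lem:3sec}) on the resulting server-only views. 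The rest of the argument is bookkeeping: tracking that the $\omega(1)$ factor in the three-party case and the setup term $O(\Lambda\log\Lambda/\log\log\Lambda)$ carry through the composition without further loss, and that no party sees more than $m-1$ ORAM server-views, which is guaranteed by the assignment of one party per server.
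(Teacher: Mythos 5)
Your proposal follows the same approach the paper takes in Appendix~\ref{app:mpc}: assign one party per ORAM server, secret-share the client state across all parties, emulate each client step by a constant-overhead secure circuit protocol (observing that the client is a linear-size circuit in every construction), identify $N$ with $T+\Lambda$, and invoke the ORAM simulators plus MPC composition for security. The paper leaves most of these details to the references~\cite{OS,LO}; your write-up makes explicit the per-step circuit sizing and the two-hybrid composability argument, but it is the same construction and is correct.
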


\section{Hashing}\label{app:hashing}
We hereby formally define a hashing scheme, and an oblivious hashing scheme. As will be shown below, these definitions capture many of the known hashing schemes, especially those applicable to ORAM.
\begin{definition}\label{def:hash}
	A \emph{$(n,m,s)$-Hashing Scheme} is a tuple $H=(\gen,\build,\lookup)$, where $\gen$ and $\lookup$ are PPT algorithms, and $\build$ is a program in the standard RAM model (see Section~\ref{sec:pre}). The scheme operates in two phases:
	\begin{enumerate}
		\item\textbf{Build-up.}
		\begin{align*}
		&k\gets H.\gen(N)\\
		&(T,S)\gets H.\build(k,X)
		\end{align*}
		The input to $H.\build$ is a key, $k$, generated by $H.\gen$, and a set of $n$ distinctly tagged values $X=\{(\tau,x)\mid \tau\in [N]\}$. The output consists of:
		\begin{itemize}
			\item An array $T$ of size $m$, referred to as the \emph{hash table}. Every position in $T$ is either empty or contains some record $(\tau,x)$. 
			\item A \emph{stash} $S$, that contains at most $s$ tagged values,
		\end{itemize}
		The build-up might fail with probability at most $N^{-\omega(1)}$ over the choice of $k$. The \textbf{build-up complexity} of $H$, denoted by $C_{\build}(H)$ is the amount of data blocks exchanged between the CPU and RAM in an $H.\build$ invocation.
		\item\textbf{Queries.}
		\begin{align*}
		&Q\gets H.\lookup(\tau,k)
		\end{align*}
		The input to $H.\lookup$ is a tag $\tau\in[N]$, and a key $k$. The output of $H.\lookup$ is a set of positions $Q\subseteq [m]$. Let $(T,S,k)$ be the output of a previously performed construction phase on input $(k,X)$. If $(\tau,x)\in X$ for some $x$, then either $(\tau,x)$ resides in $T[j]$ for some $j\in Q$, or $(\tau,v)$ resides in the stash $S$. The \textbf{query complexity} of $H$, denoted by $C_{\lookup}(H)$, is the maximal size of a set $Q$ returned by $H.\lookup$.
	\end{enumerate}
\end{definition}

\begin{definition}\label{def:oblihash}
	An \emph{Oblivious Hashing Scheme} $H$ is a hashing scheme where
	\begin{itemize}
		\item $H.\build$ is oblivious in $k,X$. That is, for any two distinct inputs, $(k,X)$ and $(k',X')$ of the same length, the corresponding access patterns produced by $H.\build$ are computationally indistinguishable.
		\item For any two sequences of distinct tags $\vec{\tau}=(\tau_1,\dots,\tau_\ell)$ and $\vec{\tau'}=(\tau'_1,\dots,\tau'_\ell)$, the outputs of $H.\lookup$ on $\vec{\tau}$ and $\vec{\tau'}$ are computationally indistinguishable.
	\end{itemize}
\end{definition}

For completeness, we provide below an overview of the hash schemes that were considered in the oblivious RAM context.

\paragraph{Standard hashing.} To store $n$ tagged values, standard hashing generates a hash table that is a plain array of $m=cn$ buckets (e.g. $c=2,4$), each of size $O(\log n)$, and a hash function $h_k:[N]\to\{1,\dots,m\}$. A record $(\tau,x)$ is stored in bucket $h_k(\tau)$ in the array. When $h_k$ is thought of as a random oracle, the probability of an overflow is negligible in $N$. In their classic hierarchical solution, Goldreich and Ostrovsky\cite{GO} constructed an oblivious variant of standard hashing with build-up complexity of $O(n\log n)$, by using PRFs as hash functions, and oblivious sort\cite{AKS,SHELL} to obliviously store the data in the hash table. Lu and Ostrovsky\cite{LO} showed that by using a (shared) stash of size $O(\log N)$, the bucket size can be reduced to $O(\log N/\log\log N)$ while maintaining a negligible failure probability.

\paragraph{Cuckoo hashing.} Cuckoo hashing\cite{CUCKOO} is a variant of standard hashing, that was found to be useful in some ORAM schemes\cite{ORAMCUCKOOPR,GM,KLO,LO}. In cuckoo hashing, every bucket may contain at most one record, and two hash functions $h_k^0,h_k^1$ are used. A record $(\tau,x)$ is initially stored in $h_k^0(\tau)$, and is moved to $h_k^1(\tau)$ once another record is inserted into $h_k^0(\tau)$. Notice that this strategy might cause a long chain, or even a cycle of block relocations. In such case, new functions are chosen, and the structure is updated. It was shown that the amortized insertion time in cuckoo hashing is $O(1)$. A variant of cuckoo hashing is stashed cuckoo hashing, presented by Kirsch et al.\cite{CUCKOOSTASH}. In stashed cuckoo hashing, a stash of size $s$ is used to handle overflows. They show that using a constant size stash, inserting $n$ elements into a table of size $\Theta(n)$ will fail with probability $O(n^{-s})$. Goodrich and Mitzenmacher\cite{GM} showed that a table of size $\Omega(\log^7 N)$ with a stash of size $O(\log N)$ fails with probability negligible in $N$. Oblivious sorting was shown to be useful also to construct oblivious cuckoo hashing scheme with build-up complexity of $O(n\log n)$\cite{ORAMCUCKOOPR,GM,CGLS}.

\paragraph{Two-tier hashing.} In two-tier hashing\cite{TWOTIER}, as the name suggests, the data is stored in two standard hash tables, each with a different hash function. Every table contains $O(n/b)$ buckets, each of capacity $b=\log^\epsilon N$ for $\epsilon\in(0.5,1)$. A record $(\tau,v)$ is inserted to the appropriate bucket in the first tier, if the bucket is full, then the record is inserted to the second tier. Chan et al.\cite{CGLS} presented an oblivious variant of two-tier hashing with build-up complexity of $O(n\log n)$, and used it to construct oblivious RAM. Although their ORAM scheme has an overhead asymptotically equal to the ORAM based on cuckoo hashing\cite{KLO}, its advantage lays in the simplicity of oblivious two-tier hashing, both in construction and analysis, relative to oblivious cuckoo hashing.

\section{Single-server Query Procedure\cite{KLO}}\label{app:klo}
Below is a pseudo-code of the query procedure of the single-server ORAM scheme from~\cite{KLO}, which is described in a high level in Section~\ref{sec:klo}.
\begin{algorithm}[H]
	\caption{Single-Server Balanced Hierarchy: Query}
	\label{algo:1query}
	\begin{algorithmic}[1]
		\State Allocate a local register of the size of a single entry.
		\State Initialize a flag $\found\gets 0$.
		\parState{Download the top level and stash $S$, record by record. If a record $(v,x)$ was found, then store it in the local register, and mark $\found\gets 1$.}
		\For{every level $i=1\dots L$}
		\State $t' \gets \floor{t/k}$
		\State $r\gets \floor{(t' \bmod d^i) / d^{i-1}}$
		\For{every hash table $j=r\dots 1$}
		\parState{If not $\found$, $Q\gets H_i.\lookup(v,\kappa^j_i)$. Else, $Q\gets H_i.\lookup(dummy \circ t,\kappa^j_i)$.}
		\parState{Download records $T^j_i[Q]$, record by record. If a record $(v,x)$ was found, then store it in the local register, and mark $\found\gets 1$.}
		\EndFor
		\EndFor
		\State If the query is a write query, overwrite $x$ in the register.
		\State Read each entry of the entire top level from both servers one at a time, re-encrypt it, then write it back, with the following exception: if the the entry $(v,x)$ was first found at the top level, then overwrite $x$ with the (possibly) new value from the register, otherwise, write $(v,x)$ in the first empty spot of the form $(empty,\cdot)$.
		\State Increment the counter $t$, and reshuffle the appropriate levels. 
	\end{algorithmic}
\end{algorithm}

\section{Proof of Claim~\ref{claim:reshuff}}\label{app:claims}
In this section, we prove Claim~\ref{claim:reshuff} from Section~\ref{sec:3analysis}. We re-state it here for convenience.

\reshufflingclaim*

\begin{proof}
	We prove the lemma using induction. For $t=0$, we have $i^*=0$ and $r^t_i=0$ for every $i$, the entire storage is empty, therefore the lemma is trivially correct. Suppose the lemma holds true for some $t=t'\cdot k\geq0$. Let $i^*$ be the smallest value of $i$ for which $r^{t+k}_i > 0$. Notice that $i^*$ is also the largest integer $i$ for which $d^{i-1}\mid t'+1$ (the level to which the reshuffle of round $t+k$ is done). We proceed by proving the lemma is correct at round $t+k$ for four different groups of levels:
	\begin{enumerate}
		\item \emph{The top level}: from the induction hypothesis, the top level is empty when the reshuffle at round $t$ is finished. Performing queries $t,\dots,t+k$ fills up the top level entirely (as it is not reshuffled down during this time). The reshuffle after query $t+k$ empties the top level, thus,~\ref{item:reshufftop} holds.
		\item \emph{Levels $1\leq i< i^*$}: the reshuffle at round $t+k$ evicts all blocks from levels smaller than $i^*$ to hash table $T^{r_{i^*}}_{i^*}$. Therefore, once the reshuffle is done, levels $i=1,\dots,i^*-1$ are empty, matching the fact that $r^{t+k}_i=0$ and proving the lemma for such $i$'s.
		\item \emph{Level $i^*$}: from the choice of $i^*$, there exists some $0<r<d$ for which $t' + 1=q\cdot d^{i^*}+r\cdot d^{i^*-1}$. It also holds that $r_{i^*}^{t+k}=r$, and $r_{i^*}^{t}=\floor{(t'\bmod d^{i^*})/d^{{i^*}-1}}=\floor{(r\cdot d^{i^*-1}-1)/d^{i^*-1}}=r-1$. These give $r_{i^*}^{t+k}=r_{i^*}^{t}+1$. By induction, we can imply that after the reshuffle of round $t$, $T^{r}_{i^*}$ is empty. Furthermore, all hash tables in levels $i<i^*$ are all full prior to the reshuffle, i.e. contain $d^{i-1}(k+s)$ blocks, since $r^t_i = d - 1$ for such $i$'s. The reshuffle procedure in round $t+k$ takes all blocks in levels $i=0,\dots,i^*-1$ and stash, and inserts them into $T^{r}_{i^*}$ and the stash. The count of these blocks is $s+k + \sum_{i=1}^{i^*-1} d^{i-1}(k+s)=d^{i^*-1}(k+s)$, making $T^{r}_{i^*}$ full (notice that blocks that are sent to the stash are replaced by dummy stash blocks). Since the other tables in level $i^*$ are untouched in  rounds $t$ and $t+k$, the lemma holds for $i^*$.
		\item \emph{Levels $i^*<i\leq L$}: these levels are not involved in the reshuffle at round $t+k$, and remain unchanged during queries $t,\dots,t+k$. Thus, it suffices to show that $r^t_i=r^{t+k}_i$ for every such $i$. Again, we rewrite $t' + 1=q\cdot d^{i^*}+r\cdot d^{i^*-1}$ for some $0<r<d$. Take some $i>i^*$. Using the fact that $r>0$, we get $r^t_i=\floor{(t'\bmod d^{i})/d^{{i}-1}}=\floor{((q\cdot d^{i^*}\bmod d^{i})+r\cdot d^{i^*-1} - 1)/d^{{i}-1}}=\floor{((q\cdot d^{i^*}\bmod d^{i})+r\cdot d^{i^*-1})/d^{{i}-1}}=r^{t+k}_i$.
		
		\end{enumerate}	\qed
	\end{proof}

\end{document}